\documentclass[11pt,draftcls, onecolumn]{IEEEtran} 
\usepackage[numbers,sort&compress]{natbib}
\usepackage{amsmath}
\usepackage{amssymb}
\usepackage{amsfonts}
\usepackage{mathrsfs}
\usepackage[dvips]{graphicx}
\usepackage{verbatim}
\usepackage{subfig}
\usepackage{balance}
\usepackage{caption}

\makeatletter

\newcommand{\Rmnum}[1]{\expandafter\@slowromancap\romannumeral #1@}
\makeatother

\newtheorem{theorem}{Theorem}

\newtheorem{corollary}{Corollary}

\newtheorem{lemma}{Lemma}

\newtheorem{definition}{Definition}

\begin{document}

\title{A Robust Generalized Chinese Remainder Theorem for Two Integers}

\author{
Xiaoping Li, Xiang-Gen Xia, {\it Fellow, IEEE},  Wenjie Wang, {\it Member, IEEE}, Wei Wang

\thanks{This work was partially supported by the NSFC (Nos. 61172092, 61302069), the Research Fund for the Doctoral Programs of Higher Education of China (No. 20130201110014), and the Air Force Office of Scientific Research (AFOSR) under Grant FA9550-12-1-0055. This work was done when Xiaoping Li was visiting the University of Delaware.

Xiaoping Li and Wenjie Wang are with MOE Key Lab for Intelligent Networks and Network Security, Xi'an Jiaotong University, Xi'an, Shaanxi 710049 P. R. China (e-mail: {\tt lixiaoping@stu.xjtu.edu.cn; wjwang@xjtu.edu.cn}).

Xiang-Gen Xia is with Xidian University, Xi'an, China, and the Department of
Electrical and Computer Engineering, University of Delaware, Newark,
DE 19716, USA. (e-mail: {\tt xxia@ee.udel.edu}).

Wei Wang is with College of Information Engineering, Tarim University, Alar, Xinjiang 843300 P. R. China (e-mail: {\tt wangwei.math@gmail.com}).}}

\maketitle
\begin{abstract}
A generalized Chinese remainder theorem (CRT) for multiple integers from residue sets has been studied recently, where the correspondence between the remainders and the integers in each residue set modulo several moduli is not known. A robust CRT has also been proposed lately for robustly reconstruct a single integer from its erroneous remainders. In this paper, we consider the reconstruction problem of two integers from their residue sets, where the remainders are not only out of order but also may have errors. We prove that two integers can be robustly reconstructed if their remainder errors are less than $M/8$, where $M$ is the greatest common divisor (gcd) of all the moduli. We also propose an efficient reconstruction algorithm. Finally, we present some simulations to verify the efficiency of the proposed algorithm. The study is motivated and has applications in the determination of multiple frequencies from multiple undersampled waveforms.
\end{abstract}

\begin{IEEEkeywords}
Chinese remainder theorem (CRT), robust CRT, dynamic range, residue sets, remainder errors,
 frequency determination from undersampled waveforms
\end{IEEEkeywords}

%
\IEEEpeerreviewmaketitle

\section{INTRODUCTION}

\IEEEPARstart{T}he tranditional Chinese remainder theorem (CRT)
is to reconstruct a single nonnegative integer from its remainders modulo several smaller positive integers (called moduli) and it has tremendous applications in various areas \cite{bi:CRT1}-\cite{Licq}. There are various generalizations of CRT, see, for example, \cite{bi:CRT2} for some of them. One of the generalizations, generalized CRT, is to determine multiple integers from their residue sets where each residue set is the set of remainders of the multiple integers modulo a modulus and the correspondence between the remainders and the multiple integers is not known,
i.e., each residue set is not ordered. This problem was first studied in \cite{Arazi}. It was later studied independently in \cite{zhou1997}-\cite{wangwei2014},
motivated from multiple frequency determination in multiple undersampled
waveforms. It exists in many engineering applications, such as phase unwrapping in signal processing \cite{xia2001}-\cite{Akhlaq}, multiwavelength optical interferometry \cite{Falaggis_1}, \cite{Falaggis_2}, radar signal processing \cite{wanggenyuan}-\cite{Qi}, mechanical engineering \cite{Beauseroy}, and wireless sensor networks \cite{Chessa}, \cite{Liwenchao_13}.

Usually the moduli in CRT or the generalized CRT mentioned above are required to be pairwise co-prime, which is not robust in the sense that a small error in its remainders may cause a large reconstruction error. Robust reconstruction methods, i.e., robust CRT, for a single integer from its erroneous remainders have been studied and obtained in \cite{huangwan}-\cite{wangcheng_10}. The basic idea for these robust CRT is to include a common factor among all the moduli and then as long as the remainder errors are less than the quarter of the greatest common factor (gcd) of all the moduli, a reconstruction error of the integer will be less than the maximum remainder error. Several robust reconstruction methods have been proposed, for example, searching based robust CRT \cite{gangli3}-\cite{bi:Xiaowei Li}, closed-form robust CRT \cite{wjwang}, \cite{Xuguangwu}, multi-stage robust CRT \cite{Binyang}, \cite{lixiao_2014}, where in \cite{lixiao_2014} the upper bound of the quarter of the gcd has been improved when the remaining integers factorized by the gcd of all the moduli are not necessarily co-prime. All these studies are only for the traditional CRT for single integers. There is no attempt in the literature to robustly reconstruct multiple integers from their erroneous residue sets, i.e., robust generalized CRT, although \cite{xia_Liu} studies the case when most of the residue sets are error free but only a few remainder sets include erroneous remainders and is not in the sense of the robustness in the literature. The main goal of this paper is on a robust generalized CRT for two integers.

For the case of more than one integer estimation from their residue sets, i.e., the generalized CRT, the reconstruction is more complicated. As mentioned in \cite{xia1999}, the main difficulty for the case of no less than two integers comes from the fact that the correspondence between the original integer and its remainder is not known, which happens when the remainders are obtained by detecting the peaks of the discrete Fourier transforms (DFT) of an undersampled waveform as described in \cite{xia1999}. Moreover, the number of the remainders in a residue set may be less than the number of the integers to determine, since there may be two or more integers sharing the same remainder for some moduli. While all the distinct elements in a residue set are known, the number of repetitions of any remainder is not known in general unless there are only two integers to determine. As mentioned earlier, for the robustness of reconstructing a single integer from its erroneous remainders, it is critical to have a gcd larger than $1$ among all the moduli. This has to hold for the above generalized CRT for multiple integers. However, the generalized CRT methods studied before are only when all the moduli are pairwise co-prime. Therefore, in order to study a robust generalized CRT, we first need to study the generalized CRT when all the moduli have a gcd larger than $1$ and all the remainders are error free. A basic problem then is to determine the dynamic range for a given set of moduli, i.e., the largest range within which multiple nonnegative integers can be uniquely determined from their residue sets modulo the given moduli. For this problem and when all the moduli are pairwise co-prime, several lower bounds for the dynamic range were obtained in \cite{zhou1997}-\cite{xiao2014}. A most recent tight bound was obtained in \cite{wangwei2014} for two integers where a closed-form and a simple determination algorithm were also obtained.

In this paper, we first present the largest dynamic range for two integers when all the moduli have a gcd larger than $1$ and the remaining integers factorized by the gcd of the moduli are pairwise co-prime. For the generalized CRT with erroneous remainders, we obtain a remainder error bound of the eighth of the gcd of all the moduli that leads to a robust estimation of two integers. An efficient reconstruction algorithm is also presented when two integers are within the largest dynamic range. Note that, for the robustness, the remainder error bound, the eighth of the gcd for two integers, seems not surprising,  when the remainder error bound, the quarter of the gcd, for a single integer in CRT is known. However, as we shall see later, the  proof is not trivial at all.

This paper is organized as follows. In Section \ref{second}, we briefly describe the mathematical problem and introduce some notations. In Section \ref{Third}, we present the largest dynamic range and a closed-form determination algorithm for two integers from their error free residue sets, where the moduli are no longer pairwise co-prime. In Section \ref{Fourth}, we present a robust generalized CRT for two integers. In Section \ref{Fifth}, we present an application of the proposed robust generalized CRT in frequency estimation from multiple undersampled waveforms. In Section \ref{Sixth}, we conclude this paper.

\section{Problem Description}\label{second}
We begin with the multiple frequency determination problem
from multiple undersampled waveforms \cite{xia1999}. For simplicity, a complex-valued waveform is given as
\begin{equation}\label{signal}
x(t)=\sum_{l=1}^{L}A_le^{2\pi j f_l t}+ w(t),
\end{equation}
where $w(t)$ is the additive noise, $A_l$ and $f_l$ are nonzero coefficients and frequencies, respectively. Suppose that these frequencies are distinct non-negative integers, i.e., $f_l=N_l$, where $N_l \in \mathbb N$ and $\mathbb N$ denotes the set of natural numbers, $N_i\neq N_j$ for $i\neq j$, in Hz. Let $K\ge2$ and $m_1,\ldots,m_K$ be $K$ positive integers with $1<m_1<\cdots < m_K$. For each $ k \in \{1,\ldots,K\}$, the sampled signal with sampling frequency $m_k$ Hz is
\begin{equation}\label{sample}
x_{m_k}[n]= x \left(\frac{n}{m_k}\right)=\sum_{l=1}^{L}A_le^{2\pi j N_ln/m_k}+ w\left(\frac{n}{m_k}\right),\ n\in \mathbb Z,
\end{equation}
where $\mathbb Z$ denotes the set of integers.
Then, we take the $m_k$-point DFT to $x_{m_k}[n]$ in (\ref{sample}), and obtain
\begin{equation}\label{DFT}
\text{DFT}_{m_k}\big(x_{m_k}[n]\big)[r] = \sum_{l=1}^{L}A_l\delta(r-r_{l,k})+W[r].
\end{equation}
Without considering the influence of noise, remainders $r_{l,k} \equiv N_l \mod m_k$ can be detected from the $m_k$-point DFT without the order information. Then, we have the $K$ error-free residue sets
\begin{equation}\label{S_Correct}
R_k\left(N_1,\ldots,N_L\right)=\bigcup_{l=1}^{L}\{r_{l,k}\}, \ k=1,\ldots,K
\end{equation}
from the $K$ DFTs. In practice, signals are usually corrupted by noises and thus the obtained remainders $r_{l,k}$ may have errors. Let the erroneous remainders be $\tilde r_{l,k}$:
\begin{equation}\label{eq:define_ri}
\tilde r_{l,k} = r_{l,k}+\Delta r_{l,k}, \ l=1,\ldots, L; \ k=1,\ldots,K,
\end{equation}
where $\Delta r_{l,k}$ denote the errors. Then, the erroneous residue sets are
\begin{equation}\label{tilde_S}
\tilde R_k\left(N_1,\ldots,N_L\right)=\bigcup_{l=1}^{L}\{\tilde r_{l,k}\}, \ k=1,\ldots,K.
\end{equation}
The problem is to determine the $L$ frequencies $\{N_1, \ldots, N_L\}$ from these erroneous residue sets.

Under the condition of all the remainders are error-free, $L=2$, and all the $K$ moduli $m_1,\ldots,m_K$ are pairwise co-prime, in \cite{wangwei2014} we obtained the largest dynamic range within which two frequencies (integers), $\{N_1, N_2\}$, can be uniquely determined from their residue sets $R_k(N_1,N_2)$,  where an efficient reconstruction algorithm was also proposed. In this paper, we first generalize the largest dynamic range result obtained in \cite{wangwei2014} from pairwise co-prime moduli $\mathcal M'=\{m_1,\ldots, m_K\}$ to non-pairwise co-prime moduli $\mathcal M=\{M_1, \ldots, M_K\}$ with $M_k=Mm_k$ for $k=1,\ldots,K$, where $0 < m_1 < \cdots <  m_K$ are pairwise co-prime moduli and $M$ is a positive integer. We then study the reconstruction problem of two integers from the erroneous residue sets $\tilde R_1(N_1, N_2), \ldots, \tilde R_K(N_1,N_2)$ modulo $M_k$ for $k=1,\ldots, K$. This question has two parts: 1) the bound of errors, i.e., to what extent of errors we can have a robust estimation of $\{N_1, N_2\}$? 2) how to efficiently and robustly reconstruct $\{N_1, N_2\}$? In what follows, we always denote $\mathcal {M'} =\left\{m_1,\ldots, m_K\right\}$ a set of moduli, $\Gamma = \prod_{k=1}^{K}m_k$, and $\mathcal {M} =\left\{M_1,\ldots, M_K\right\}$ a set of moduli.

\section{Generalized CRT For Two Integers With Error-free Residue Sets}\label{Third}
In this section, we first recall the basics of dynamic range with modulus set $\mathcal M'$ obtained in \cite{wangwei2014}. Then we obtain the largest dynamic range with a modulus set $\mathcal M$ and an efficient method to determine two integers from error-free residue sets.

We first introduce some notations. The remainder of $x$ modulo $y$ is denoted as $\langle x \rangle_y$. For integer $n > 0$, let $\mathbb Z_n$ denote the set $\left\{0,1,\ldots, n-1\right\}$. A set of $n$ elements is called an $n$-set. If we let $\mathcal A =
\{a_1,\ldots, a_L\}$ with $a_l\in \mathbb N$, $l=1,\ldots, L$, $R_k(a_1,\ldots, a_L)$ is also denoted by $R_k(\mathcal A)$.

\begin{definition}\label{defrange}
The dynamic range of a modulus set $\mathcal N =\{n_1, \ldots, n_K\}$ is the minimal positive integer $D$ such that there are two different $L$-sets $\mathcal A$ and $\mathcal B$ with $\mathcal A, \mathcal B \subseteq \mathbb Z_{D+1}$ satisfying $R_k(\mathcal A)=R_k(\mathcal B)$ for each modulus $n_k$. It is denoted by $D_L(n_1,\ldots,n_K)$, or simply $D_L(\mathcal N)$.
\end{definition}

According to Definition \ref{defrange}, if any set of $L$ integers in $\mathbb Z_{D'}$ can be uniquely determined by their remainders modulo $n_1, \ldots, n_K$, then we have $D_{L}(\mathcal N) \ge D'$. On the other hand, if $L$ integers are in $\mathbb Z_{D_L(\mathcal N)}$, then they can be uniquely determined from their remainders modulo $n_1, \ldots, n_K$. Hence, the dynamic range $D_L(\mathcal N)$ in Definition \ref{defrange} is the largest dynamic range within which any $L$ integers are uniquely determined by their remainders modulo $n_1, \ldots, n_K$. For $L=2$ and a given modulus set $\mathcal M'$, the largest dynamic range $D_2(\mathcal M')$ is obtained in \cite{wangwei2014} as follows.

\begin{lemma} \label{d_m}
\cite{wangwei2014} If $m_{K-1}\ge 3$, then $D_2(\mathcal M')=d$. In other words, if $\mathcal M' \ne\{2, 2n+1\}$ for any positive integer $n$, then
\begin{equation}
D_2(\mathcal M')=d,
\end{equation}
where
\begin{equation}\label{defi_d}
d=\min_{I\subseteq \{1,\ldots,K\}}\Bigg\{\prod_{i\in I}m_i+\prod_{i\in \overline{I}}m_i\Bigg\}.
\end{equation}
\end{lemma}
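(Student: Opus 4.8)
The plan is to prove the two inequalities $D_2(\mathcal M')\le d$ and $D_2(\mathcal M')\ge d$ separately, after first recording that an \emph{ambiguity} --- a pair of distinct $2$-sets $\mathcal A=\{a_1,a_2\}$, $\mathcal B=\{b_1,b_2\}$ in $\mathbb N$ with $R_k(\mathcal A)=R_k(\mathcal B)$ for every $k$ --- is invariant under a common integer shift of all four entries (shifting by $c$ changes each remainder by $\langle c\rangle_{m_k}$ and hence preserves the equality of residue sets). So I may always translate so that $\min\{a_1,a_2,b_1,b_2\}=0$, which turns $D_2(\mathcal M')$ into the minimum over all ambiguities of the \emph{spread} $\max\{a_1,a_2,b_1,b_2\}$. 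The whole lemma then reduces to showing that the minimal spread of an ambiguity equals $d$.

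For any fixed ambiguity, $R_k(\mathcal A)=R_k(\mathcal B)$ means that modulo each $m_k$ the two remainders are matched either ``straight'' ($a_1\equiv b_1$, $a_2\equiv b_2$) or ``crossed'' ($a_1\equiv b_2$, $a_2\equiv b_1$). Collecting the straight indices into $I$ and the crossed ones into $\overline I$ (a $k$ with $a_1\equiv a_2\pmod{m_k}$ may go on either side) and writing $P=\prod_{i\in I}m_i$, $Q=\prod_{i\in\overline I}m_i$ with $PQ=\Gamma$ and $\gcd(P,Q)=1$, pairwise coprimality yields the four divisibilities $P\mid a_1-b_1$, $P\mid a_2-b_2$, $Q\mid a_1-b_2$, $Q\mid a_2-b_1$. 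The upper bound is then immediate: choosing $I$ to attain the minimum in (\ref{defi_d}) and setting $\mathcal A=\{0,P+Q\}$, $\mathcal B=\{P,Q\}$ produces, by exactly these four congruences, an ambiguity of spread $P+Q=d$, so $D_2(\mathcal M')\le d$.

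For the lower bound I would split on the relative position of the two pairs on the number line. If some element of one pair lies strictly between the two elements of the other pair --- the interleaved and nested configurations --- then the associated $P$-multiple and $Q$-multiple point in opposite directions, so $|a_1-a_2|$ (or $|b_1-b_2|$) is at least $P+Q$, whence the spread is $\ge P+Q\ge d$. The remaining, genuinely delicate, case is the \emph{separated} one, where all of one pair lies to one side of all of the other. Here I would add the two straight differences and, separately, the two crossed differences to obtain $b_1+b_2-a_1-a_2=P(p_1+p_2)=Q(q_1+q_2)$ with all $p_i,q_i\ge 0$; coprimality forces this common value to equal $\Gamma t$ with $t\ge 1$ (the value $t=0$ would make $\mathcal A=\mathcal B$). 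Combining this telescoping identity with the fact that each of $\mathcal A,\mathcal B$ is a genuine $2$-set (so its internal gap is $\ge 1$) gives $2\cdot(\text{spread})\ge \Gamma t+2\ge \Gamma+2$, i.e.\ spread $\ge \Gamma/2+1$.

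The crux is to verify that $\Gamma/2+1\ge d$ precisely under the hypothesis $m_{K-1}\ge 3$, and this is where the excluded family $\{2,2n+1\}$ enters. I expect to show that $\Gamma/2+1\ge d$ is equivalent to the existence of a partition with $(P-2)(Q-2)\ge 2$, which holds exactly when $m_{K-1}\ge 3$: it is automatic for $K\ge 3$ since then $m_{K-1}\ge m_2\ge 3$, while for $K=2$ it reduces to $m_1\ge 3$, i.e.\ to excluding $\mathcal M'=\{2,2n+1\}$ (for which the only partition gives $(P-2)(Q-2)=0$ and the separated bound genuinely fails). I anticipate this separated case, and pinning the threshold to $m_{K-1}\ge 3$, to be the main obstacle; the interleaved case and the construction are comparatively routine. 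Granting it, every ambiguity has spread $\ge d$, so $D_2(\mathcal M')\ge d$, and together with the construction this gives $D_2(\mathcal M')=d$.
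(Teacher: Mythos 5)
Your proposal is correct, but there is nothing in the paper to measure it against line by line: Lemma \ref{d_m} is quoted from \cite{wangwei2014} without proof, so the nearest in-paper benchmark is the proof of Theorem \ref{D_M_prime}, the lifted statement $D_2(\mathcal M)=Md$. Relative to that proof your route is genuinely different and, in the delicate branch, sharper. The paper normalizes the minimum element to $0$ and then splits on whether $N_1+N_2\le M\Gamma$; in the first branch it exploits that each modulus must divide $N_1$ or $N_2$, and in the second it uses the crude chain $\frac{1}{2}M\Gamma\ge Mm_1+Mm_2\cdots m_K\ge Md$, which is exactly why Theorem \ref{D_M_prime} carries the stronger hypotheses $m_1\ge 3$ and $K>2$. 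You instead classify the geometry of the two pairs after the same normalization: the straight/crossed matching yields the four divisibilities by $P=\prod_{i\in I}m_i$ and $Q=\prod_{i\in\overline{I}}m_i$; any configuration with an element strictly inside the other pair's interval forces spread $\ge P+Q\ge d$; and the separated configuration gives spread $\ge \Gamma/2+1$ via the telescoping identity $b_1+b_2-a_1-a_2=\Gamma t$ with $t\ge 1$, plus the two internal gaps being $\ge 1$. Your closing reduction is the real gain and it does go through: $d\le \Gamma/2+1$ holds iff some partition satisfies $(P-2)(Q-2)\ge 2$ (one line of algebra), and this holds precisely when $m_{K-1}\ge 3$ --- automatically for $K\ge 3$ (take $I=\{K\}$, so $P-2\ge 3$ and $Q-2\ge 4$), and for $K=2$ exactly when $m_1\ge 3$, with $\{2,2n+1\}$ genuinely failing (for $\{2,3\}$ the separated ambiguity $\{0,1\}$, $\{3,4\}$ has spread $4<5=d$). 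This pins down the exact threshold of the lemma, which the coarser Theorem-\ref{D_M_prime}-style estimate cannot, since it throws away the $K=2$ case and the case $m_1=2$, $K\ge 3$. One cosmetic blemish: your parenthetical reason that $t=0$ ``would make $\mathcal A=\mathcal B$'' is not the right justification; in the separated configuration $b_1+b_2-a_1-a_2>0$ holds outright, which is all you need, whereas in non-separated configurations $t=0$ can occur without $\mathcal A=\mathcal B$.
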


As an example, we consider the case of $m_1=3$, $m_2=5$, and $m_3=7$. According to Lemma \ref{d_m}, we know that the largest dynamic is $d=3\times 5 +7 =22$. Next, we determine the largest dynamic range with modulus set $\mathcal {M}$ for two integers, i.e., $D_2(\mathcal M)$.

\subsection{The Largest Dynamic Range for Two Integers with Modulus Set $\mathcal M$}
\begin{theorem}\label{D_M_prime}
If $m_{1}\ge 3$ and $K > 2$, then $D_2(\mathcal M)= Md$.
\end{theorem}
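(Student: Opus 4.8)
The plan is to read $D_2(\mathcal M)$ through Definition \ref{defrange} as the smallest level at which two distinct $2$-sets inside $\mathbb Z_{D+1}$ become indistinguishable modulo every $M_k$, and to pin this level at exactly $Md$ by proving $D_2(\mathcal M)\le Md$ and $D_2(\mathcal M)\ge Md$ separately. Throughout I would exploit the factorization $M_k=Mm_k$, the elementary identity $\langle Mx+\rho\rangle_{Mm_k}=M\langle x\rangle_{m_k}+\rho$ for $0\le\rho<M$, and the fact that, since the $m_k$ are pairwise co-prime, $\operatorname{lcm}(M_1,\ldots,M_K)=M\Gamma$.

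For the upper bound I would lift a worst-case collision for the co-prime moduli. Since Lemma \ref{d_m} gives $D_2(\mathcal M')=d$ (the hypothesis $m_1\ge 3$ forces $m_{K-1}\ge 3$), there are distinct $2$-sets $\{a_1',a_2'\},\{b_1',b_2'\}\subseteq\mathbb Z_{d+1}$ with $R_k(\{a_1',a_2'\})=R_k(\{b_1',b_2'\})$ for every $m_k$. Scaling by $M$ and using $\langle Mx\rangle_{Mm_k}=M\langle x\rangle_{m_k}$ turns these into distinct $2$-sets $\{Ma_1',Ma_2'\},\{Mb_1',Mb_2'\}\subseteq\mathbb Z_{Md+1}$ whose residue sets modulo each $M_k$ still agree. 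This exhibits a collision inside $\mathbb Z_{Md+1}$, hence $D_2(\mathcal M)\le Md$.

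The lower bound is the substantive half. I would assume, for contradiction, a collision $\mathcal A=\{a_1,a_2\}\ne\mathcal B=\{b_1,b_2\}$ with all four integers in $\mathbb Z_{Md}$ and $R_k(\mathcal A)=R_k(\mathcal B)$ for all $k$, then reduce every residue set modulo $M$ (legitimate because $M\mid M_k$) to obtain $\{\langle a_1\rangle_M,\langle a_2\rangle_M\}=\{\langle b_1\rangle_M,\langle b_2\rangle_M\}$. The argument then splits on whether $M$ separates the two integers. If $\langle a_1\rangle_M=\langle a_2\rangle_M=\rho$, all four residues equal $\rho$, so writing $a_i=M\alpha_i+\rho$ and $b_i=M\beta_i+\rho$ with $\alpha_i,\beta_i\in\mathbb Z_d$ and invoking $\langle Mx+\rho\rangle_{Mm_k}=M\langle x\rangle_{m_k}+\rho$ converts the collision into a collision of $\{\alpha_1,\alpha_2\},\{\beta_1,\beta_2\}$ modulo $\mathcal M'$ inside $\mathbb Z_d$; since $D_2(\mathcal M')=d$ forbids any collision strictly below level $d$, the two quotient sets coincide and $\mathcal A=\mathcal B$. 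If instead $\langle a_1\rangle_M\ne\langle a_2\rangle_M$, the distinct residues mod $M$ rigidly label the elements of each residue set, so after relabelling $b_1,b_2$ the matching is forced to be $a_i\equiv b_i\pmod{M_k}$ for every $k$; hence $M\Gamma=\operatorname{lcm}(M_1,\ldots,M_K)$ divides $a_i-b_i$, while $|a_i-b_i|\le Md-1<M\Gamma$ because $d\le\Gamma$, forcing $a_i=b_i$ and again $\mathcal A=\mathcal B$.

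The crux, and the step I expect to cost the most care, is the second case of the lower bound: one must argue that the common factor $M$, once it separates $a_1$ from $a_2$, completely removes the out-of-order ambiguity that makes the generalized CRT delicate and collapses it to an ordinary per-index congruence, and then verify the range inequality $d\le\Gamma$ that makes $M\Gamma$ too large to divide a nonzero difference in $\mathbb Z_{Md}$. Checking $d\le\Gamma$ is where the standing hypotheses $m_1\ge 3$ and $K>2$ enter, via the split $I=\{K\}$ in \eqref{defi_d}, which gives $d\le m_K+\Gamma/m_K<\Gamma$. Combining both inequalities yields $D_2(\mathcal M)=Md$.
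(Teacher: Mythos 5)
Your proof is correct, and while your upper bound is essentially the paper's (the paper writes out the explicit collision $\{0,Md\}$ versus $\{Md_1,Md_2\}$, which is exactly the $M$-scaled version of the extremal collision for $\mathcal M'$ that you import from Lemma~\ref{d_m}), your lower bound takes a genuinely different route. The paper analyzes the extremal collision at level $D_2(\mathcal M)$ directly: it invokes Lemma~1 of \cite{wangwei2014} to normalize the smallest colliding element to $0$, then splits on whether $N_1+N_2\le M\Gamma$; in the first case it deduces that each $M_k$ divides $N_1$ or $N_2$ and lower-bounds $N_1+N_2\ge Md$ by the complementary-product structure of $d$, and in the second case it uses the crude size bound $D_2(\mathcal M)\ge \lceil M\Gamma/2\rceil+1\ge Md$, which is where $m_1\ge 3$ and $K>2$ enter. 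You instead argue by contradiction from a hypothetical collision inside $\mathbb Z_{Md}$, reduce everything modulo $M$, and split on whether the two mod-$M$ residues coincide: the equal-residue case reduces, via the identity $\langle Mx+\rho\rangle_{Mm_k}=M\langle x\rangle_{m_k}+\rho$, to the coprime-moduli statement $D_2(\mathcal M')=d$ used as a black box, while the distinct-residue case uses the mod-$M$ labels to collapse the unordered residue sets into per-index congruences, so ordinary CRT uniqueness modulo $M\Gamma$ together with $d\le\Gamma$ (where your hypotheses enter instead) forces $\mathcal A=\mathcal B$. Your reduction is more modular and makes transparent the structural reason the dynamic range simply scales by $M$; it also avoids quoting the external normalization lemma ($N_1'=0$) that the paper relies on without proof. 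The paper's argument, in exchange, is closer to a self-contained from-scratch analysis in the style of the coprime case and exhibits the shape of extremal collisions (elements divisible by complementary products of the $M_k$), which is the structural fact its construction for the upper bound is built on.
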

\begin{proof}
According to the definition of $d$ in (\ref{defi_d}), we have
$$
d \le m_1 + \prod_{k=2}^ {K}m_k < m_1\cdots m_K.
$$
Hence, there must exist a non-empty set $I \subseteq \{1,\ldots,K\}$ such that
\begin{equation}\label{sum}
d=\prod_{k\in I}m_k + \prod_{k\in \overline{I}}m_k.
\end{equation}
We denote the two terms in the summation (\ref{sum}) by $d_1$ and $d_2$:
$$
d_1=\prod_{k\in I}m_k, \  d_2=\prod_{k\in \overline{I}}m_k.
$$
Construct two $2$-sets $\mathcal A_0$ and $\mathcal B_0$ as:
$$
\mathcal A_0=\{0, Md\}, \  \mathcal B_0=\{Md_1, Md_2\}.
$$
It is not difficult to find that $R_k(\mathcal A_0)= R_k(\mathcal B_0)$ holds with moduli $M_1,\ldots, M_K$. According to Definition \ref{defrange}, we obtain
\begin{equation}\label{D_le}
D_2(\mathcal M)\le Md.
\end{equation}
Next, we prove that $D_2(\mathcal M) \ge Md$.

By Definition \ref{defrange}, $D_2(\mathcal M)$ is the minimal positive integer such that there are two different $2$-sets $\mathcal A=\{N'_1, N'_2\}\subseteq \mathbb Z_{D_2(\mathcal M)+1}$ and $\mathcal B=\{N_1,N_2\}\subseteq \mathbb Z_{D_2(\mathcal M)+1}$ with $R_k(\mathcal A)= R_k(\mathcal B)$, $k=1,\ldots,K$. Without loss of generality, we assume $N'_1< N'_2$, $N_1< N_2$, and $N'_1\le N_1$. Clearly,
$$
\min\{\mathcal A \cup \mathcal B\}=N'_1.
$$
By Lemma $1$ in \cite{wangwei2014}, we have
$$
N'_1=0.
$$
Then we have two cases below.

\noindent \textbf{Case 1}: $N_1+N_2\le M\Gamma.$

Since $N'_1=0$ and $R_k(\mathcal A)= R_k(\mathcal B)$ for $k=1,\ldots,K$, we have
$
0\in R_k(\mathcal B).
$
Hence, $M_k$ divides $N_1$ or $N_2$, i.e.,
$$
M_k|N_1 \ \text{or} \ M_k|N_2, \ k=1,\ldots, K.
$$
Define $I=\big\{k\colon M_k|N_1\big\}$ and $J=\big\{k\colon M_k|N_2\big\}$. Then we have
$$
I \cup J = \{1,\ldots, K\},
$$
which means $\overline{I}\subseteq J$. Hence,
$$
N_1\ge M \prod_{i\in I}m_i, \; N_2\ge  M \prod_{i\in J} m_i.
$$
Therefore,
$$
N_1+N_2  \ge   M \prod_{i\in I}m_i+ M \prod_{i\in J}m_i
 \ge  M \prod_{i\in I}m_i+ M \prod_{i\in \overline{I}}m_i
\ge Md.
$$
Note that $\mathcal A =\{N'_1, N'_2\}$ and $\mathcal B = \{N_1,N_2\}$ have the same residue sets, for each modulus $M_k$, we have
$$
N'_1 + N'_2\equiv N_1+N_2 \bmod M_k.
$$
Hence,
$$
N'_2\equiv N_1+N_2\bmod M_k.
$$
That is,
$$
N'_2= N_1+N_2+k M\Gamma \ \text{for some integer $k$}.
$$
If $k\le -1$, then we obtain from $N_1+N_2\le M\Gamma$ that $N'_2\le 0$, which is a contradiction. Therefore, $k\ge 0$, and hence
$$
N'_2\ge N_1+N_2.
$$
It follows from (\ref{addn1n2}) that $N'_2\ge Md$, which leads to
$$
D_2(\mathcal M)\ge N'_2 \ge Md.
$$

\noindent\textbf{Case 2}: $N_1+N_2\ge M\Gamma +1$.

Since $N_2 > N_1$, we obtain
$$
2N_2 \ge N_1+1+N_2 \ge M \Gamma +2.
$$
Note that $D_2(\mathcal M)\ge N_2$. Then we have
$$
D_2(\mathcal M)\ge \bigg\lceil\frac{1}{2}{M\Gamma}\bigg\rceil + 1.
$$
Since $m_1\ge 3$ and $K > 2$, we have
\begin{eqnarray*}
\frac{1}{2}{M\Gamma} & \ge   \frac{1}{2}M m_2\cdots m_K+ M m_2\cdots m_K
\ge   Mm_1+ M m_2\cdots m_K
\ge   Md.
\end{eqnarray*}
Thus,
\begin{equation}\label{D_ge}
D_2(\mathcal M)\ge \min \bigg\{Md, \Big\lceil\frac{1}{2}{M\Gamma}\Big\rceil+1\bigg\} = Md.
\end{equation}
Combining (\ref{D_le}) and (\ref{D_ge}), we obtain
$$
D_2(\mathcal M) = Md.
$$
\end{proof}

\subsection{A Generalized CRT for Two Integers with Modulus Set $\mathcal M$}

We begin with the reconstruction of one integer $N$ with modulus set $\mathcal M$. Let $N$ be an integer to be reconstructed, and $r_k$ be the remainders of $N$ modulo $M_k$, i.e.,
\begin{equation} \label{eq:N}
r_k \equiv N \bmod M_k, \ k=1,\ldots, K,
\end{equation}
where $0\le r_k < M_k$. From (\ref{eq:N}), we have
\begin{equation}\label{rc_modM}
r_k \equiv N \bmod M,\ k=1,\ldots,K.
\end{equation}
That is, all remainders $r_k$ modulo $M$ have the same value, named common remainder \cite{ore}, denoted as $r^c$.
It follows from (\ref{eq:N}) that both $r_k-r^c$ and $N-r^c$ have the same factor $M$.
Let
\begin{equation}\label{N0_q}
Q = (N-r^c)/ {M}
\end{equation}
and
\begin{equation}\label{def_q}
q_k  = (r_k-r^c)/ {M}.
\end{equation}
Then, congruence (\ref{eq:N}) is equivalent to
$$
q_k\equiv Q \bmod m_k,\ \ k=1,\ldots,K.
$$
According to the traditional CRT, $Q$ can be uniquely reconstructed as
\begin{equation} \label{eq:calculate_N_0}
Q \equiv \sum\limits_{k = 1}^ K \Gamma_k \overline \Gamma_k q_k \bmod \Gamma,
\end{equation}
if and only if $Q < \Gamma$, where $\Gamma_k=\Gamma /m_k$, and $\overline \Gamma_k$ is the multiplicative inverse of $\Gamma_k$ modulo $m_k$, i.e.,
$$
\Gamma_k \overline \Gamma_k \equiv 1  \bmod m_k.
$$
Therefore, $N$ can be uniquely reconstructed by
\begin{equation} \label{eq:calculate_N}
N = MQ+r^c.
\end{equation}

Now, we consider the reconstruction of two integers $\{N_1, N_2 \}$ from their error-free residue sets $R_k(N_1,N_2)$ with modulus set $\mathcal M$. Similar to the reconstruction of one integer, the common remainders are significant to the reconstruction. First, from the residue sets, obtain the two common remainders modulo all the remainders by $M$.

Let $\{r^c_1, r^c_2\}$ be the two common remainders. When the two common remainders are not equal, i.e., $r^c_1 \ne r^c_2$, we have $R_k(N_1,N_2)=\left\{ r_{1,k}, r_{2,k} \right\}$ with $r_{1,k}\ne r_{2,k}$. Note that
$$
\{r_1^c, r_2^c\}= \big\{\langle r_{1,k}\rangle_M, \langle r_{2,k}\rangle_M \big\}
$$
holds for each $k$, $k=1,\ldots, K$. On the other hand,
$$
\{r_1^c, r_2^c\}= \big\{\langle N_1 \rangle_M, \langle N_2 \rangle_M \big\}.
$$
Hence, all the remainders in $R_k(N_1,N_2)$ can be split into two sets, $\{ r_{1,1}, \ldots,  r_{1,K}\}$ and $\{ r_{2,1}, \ldots,  r_{2,K}\}$, according to $r_1^c$ and $r_2^c$. Using the traditional CRT, $N_1$ and $N_2$ can be uniquely determined by their remainders $\{ r_{1,1}, \ldots,  r_{1,K}\}$ and $\{ r_{2,1}, \ldots,  r_{2,K}\}$, respectively. This also means that $\{N_1, N_2\}$ can be uniquely determined if and only if $0 \le N_1, N_2 < M \Gamma$.

When the two common remainders are the same, i.e., $r^c_1 = r^c_2 = r^c$, we let
$
q_{l,k}= (r_{l,k}-r^c)/M, \ l=1,2; \ k=1,\ldots, K.
$
Then, (\ref{eq:N}) is equivalent to
\begin{align}\label{equation_star}
q_{l,k} \equiv \frac{N_l -r^c}{M} \bmod m_k.
\end{align}
Denote $R_k(Q_1, Q_2) = \{q_{1,k}, q_{2,k}\}$ for $k=1,\ldots, K$, where $Q_l = (N_l -r^c)/M$ for $l=1,2$. Since $N_l < Md$, we have $Q_l < d$. By the definition of dynamic range, we know that $\{Q_1, Q_2\}$ can be uniquely determined by their residue sets $R_k(Q_1, Q_2)$. Consequently, $\{N_1, N_2\}$ can be uniquely reconstructed by using formula (\ref{eq:calculate_N}).

In summary, we have the following corollary.
\begin{corollary}
Assume that $m_{1}\ge 3$ and $K > 2$. Let $\{r_1^c, r_2^c\}$ be the common remainders defined as above. We have the following results.

\noindent 1) If $r^c_1 \ne r^c_2$ and $0\le N_1, N_2 < M\Gamma$, then $\{N_1, N_2\}$ can be uniquely determined from the above algorithm;

\noindent 2) If $r^c_1 =  r^c_2$ and $0\le N_1, N_2 < Md$, then $\{N_1, N_2\}$ can be uniquely determined from the above algorithm.
\end{corollary}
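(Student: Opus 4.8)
The plan is to treat the two cases of the corollary separately, since they rest on different reconstruction mechanisms, and in each case to reduce the claim to a uniqueness result already in hand: the classical CRT for part 1), and the dynamic-range bound of Lemma \ref{d_m} for part 2). In both cases the common remainders $\{r_1^c,r_2^c\}$ are the pivot, because they encode $\{\langle N_1\rangle_M,\langle N_2\rangle_M\}$ and thus decide whether the two integers can be separated modulo $M$.

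For part 1), I would first argue that the hypothesis $r_1^c\neq r_2^c$ forces every residue set $R_k(N_1,N_2)$ to be a genuine $2$-set. Reducing the elements of $R_k(N_1,N_2)$ modulo $M$ yields $\{\langle N_1\rangle_M,\langle N_2\rangle_M\}=\{r_1^c,r_2^c\}$; if two remainders in some $R_k$ coincided, their reductions modulo $M$ would coincide as well, contradicting $r_1^c\neq r_2^c$. This lets me label each of the two remainders in $R_k$ unambiguously, assigning those congruent to $r_1^c$ modulo $M$ to $N_1$ and those congruent to $r_2^c$ to $N_2$, and thereby recover the two ordered remainder vectors $(r_{1,1},\ldots,r_{1,K})$ and $(r_{2,1},\ldots,r_{2,K})$. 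Since $\mathrm{lcm}(M_1,\ldots,M_K)=M\Gamma$ (as $M_k=Mm_k$ with the $m_k$ pairwise coprime), the system $N_l\equiv r_{l,k}\bmod M_k$ is consistent and determines each $N_l$ uniquely modulo $M\Gamma$; hence the bound $0\le N_1,N_2<M\Gamma$ gives exactly the claimed unique reconstruction.

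For part 2), the labelling of part 1) fails because $r_1^c=r_2^c=r^c$, so I would pass to the quotient problem instead. Setting $q_{l,k}=(r_{l,k}-r^c)/M$ and $Q_l=(N_l-r^c)/M$ — well-defined nonnegative integers because every element of $R_k$ reduces to $r^c$ modulo $M$ and because $r^c=\langle N_l\rangle_M\le N_l$ — the congruences $N_l\equiv r_{l,k}\bmod M_k$ become $Q_l\equiv q_{l,k}\bmod m_k$, so $R_k(Q_1,Q_2)=\{q_{1,k},q_{2,k}\}$ is determined by $R_k(N_1,N_2)$ and, via $N_l=MQ_l+r^c$, reconstructing $\{N_1,N_2\}$ is equivalent to reconstructing $\{Q_1,Q_2\}$ from their residue sets modulo the pairwise coprime $m_k$. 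The range hypothesis $0\le N_l<Md$ yields $0\le Q_l<d$, i.e. $\{Q_1,Q_2\}\subseteq\mathbb Z_d$ (and $Q_1\neq Q_2$ since $N_1\neq N_2$). Because $m_1\ge 3$ and $K>2$ imply $m_{K-1}\ge 3$, Lemma \ref{d_m} gives $D_2(\mathcal M')=d$, and by Definition \ref{defrange} any two integers in $\mathbb Z_d$ are uniquely determined by their residue sets modulo $m_1,\ldots,m_K$. This pins down $\{Q_1,Q_2\}$ and hence $\{N_1,N_2\}$.

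The only genuinely substantive input is the equality $D_2(\mathcal M')=d$ used in part 2), which is Lemma \ref{d_m} and is not reproved here; everything else is the bookkeeping that reduces the non-coprime, unordered, two-integer problem to that lemma (part 2) or to the ordered single-integer CRT (part 1). Accordingly, the main point to get right is not a deep estimate but the well-definedness and correct labelling in each branch: verifying that distinct common remainders force each $R_k$ to be a clean $2$-set so the split is unambiguous, and that equal common remainders make the $q_{l,k}$ integral so the quotient problem is legitimate. I expect no serious obstacle beyond carefully confirming these two structural facts and the range translations $N_l<M\Gamma\Rightarrow$ CRT uniqueness and $N_l<Md\Rightarrow Q_l<d$.
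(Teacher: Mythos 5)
Your proof is correct and takes essentially the same route as the paper: in case 1) it splits each residue set by reduction modulo $M$ and applies the CRT modulo $\mathrm{lcm}(M_1,\ldots,M_K)=M\Gamma$, and in case 2) it passes to the quotients $Q_l=(N_l-r^c)/M$, $q_{l,k}=(r_{l,k}-r^c)/M$ and invokes the dynamic range $D_2(\mathcal M')=d$ of Lemma \ref{d_m}. The only difference is that you spell out the well-definedness and labelling checks (each $R_k$ being a genuine $2$-set in case 1, integrality of the $q_{l,k}$ in case 2) that the paper leaves implicit.
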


\textbf{Example 1.} Let $M=100$ and moduli be $m_1=3$, $m_2=5$, and $m_3=7$. By Theorem \ref{D_M_prime}, we obtain that the largest dynamic range is $Md=2200$. Hence, two integers $\{N_1, N_2 \}$ less than $2200$ can be uniquely determined from their residue sets $R_k(N_1, N_2)$. Suppose that the residue sets are $R_1( N_1,  N_2)=\{69, 195\}$, $R_2( N_1,  N_2)=\{95, 169\}$, and $R_3( N_1, N_2)=\{69, 395\}$. Then, the two common remainders are $\{r_1^c, r_2^c\}=\{69, 95\}$. Hence, the remainder in the residue sequences can be split into $\{69, 169, 69\}$ and $\{195, 95, 395\}$ corresponding to $r_1^c=69$ and $r_2^c=95$, respectively. By using the traditional CRT, we have $\{N_1, N_2 \}=\{ 2169, 1095\}$.

\textbf{Example 2.}
Consider the example above. Suppose that the residue sets are $R_1( N_1,  N_2)=\{98, 198\}$, $R_2( N_1,  N_2)=\{98, 398\}$, and $ R_3( N_1, N_2)=\{398, 498\}$ modulo $300$, $500$, and $700$, respectively. In this case, the two common remainders are the same: $r_1^c=r_2^c= 98$. By (\ref{def_q}), we have $R_1(Q_1, Q_2)=\{0, 1\}$, $R_2(Q_1, Q_2)=\{0, 3\}$, and $R_3(Q_1, Q_2)=\{3, 4\}$ modulo $3$, $5$, and $7$, respectively. By the reconstruction algorithm obtained in \cite{wangwei2014}, we have $\{Q_{1},Q_2\}= \{10, 18\}$. By (\ref{eq:calculate_N}), we can reconstruct the two integers $\{N_1, N_2\}$ as $\{1098, 1898\}$.

\section{A Robust Generalized CRT for Two Integers}\label{Fourth}
In this section, we discuss a robust generalized CRT for two integers when the residue sets have errors.

\subsection{Remainders with Errors}
As discussed above, the two common remainders, $\{r_1^c, r_2^c\}$, are the key of the reconstruction of integers $\{N_1, N_2\}$. When remainders are error-free, the two common remainders can be directly determined by any residue set of $\{N_1, N_2\}$. However, this may not be true when residue sets have errors. Take Example $2$ for example. Suppose that the erroneous residue sets are $\tilde R_1( N_1,  N_2)=\{108, 209\}$, $\tilde R_2( N_1,  N_2)=\{92, 399\}$, and $\tilde R_3( N_1, N_2)=\{397, 507\}$. Then, the residue sets modulo $M$ are $\{8,9\}$, $\{92,99\}$, and $\{7,97\}$, respectively. Clearly, the erroneous residue sets $\tilde R_k( N_1,  N_2)$ modulo $M$ are different from each other and we can not directly determine the common remainders $\{r_1^c, r_2^c\}$ from $\tilde R_k( N_1,  N_2)$.

Let $\tilde r_{l,k}^c$ be the remainder of $\tilde r_{l,k}$ modulo $M$, i.e.,
\begin{equation}\label{r_tk}
\tilde r_{l,k}^c = \langle \tilde r_{l,k}\rangle_M, \ l=1,2; \ k=1,\ldots, K.
\end{equation}
In case $\tilde R_k(N_1,N_2)$ has only one element, i.e., $\tilde r_{1,k}=\tilde r_{2,k}$, we have $\tilde r_{1,k}^ c =\tilde r_{2,k}^c$ counted twice (repeated once) in the above sequence. This provides total $2K$ common remainders and some of them may be the same. In order to estimate two common remainders from these $2K$ common remainders $\tilde r_{1,1}^c, \ldots, \tilde r_{1,K}^c$, $\tilde r_{2,1}^c, \ldots, \tilde r_{2,K}^c$, two appropriate clusters, each of which contains $K$ remainders, are formed first. Intuitively the deviation of two clusters should be large. Now, we determine two clusters from these erroneous residue sets. For convenience, we denote these $2K$ common remainders as $\tilde r_{1}^ c, \ldots, \tilde r_{2K}^ c$ and then sort them  in the increasing order as follows
\begin{equation}\label{sequence_rc}
\tilde r^c_{\varsigma_{(1)}}\le\cdots\le \tilde r^c_{\varsigma_{(2K)}},
\end{equation}
where $\varsigma$ is a permutation of the set $\{1,\ldots,2K\}$.

For any two adjacent common remainders $\tilde r^c_{\varsigma_{(k)}}$ and $\tilde r^c_{\varsigma_{(k+1)}}$, we define the distance $D_k$ as
\begin{equation}\label{dis_d}
D_k= \begin{cases}\tilde r^c_{\varsigma_{(k+1)}}-\tilde r^c_{\varsigma_{(k)}},& \text{if} \ k=1,\ldots, 2K-1 \\
\tilde r^c_{\varsigma_{(1)}}-\tilde r^c_{\varsigma_{(2K)}}+M, & \text{if} \ k=2K.
\end{cases}
\end{equation}
It is clear that the nonnegative distances $D_k$ satisfy the following equation
\begin{equation}\label{sum_dk}
\sum_{k=1}^{2K}D_k=M.
\end{equation}
Moreover, we have the following results.

\begin{lemma}\label{d_kM8}
Let $\tau = \text{max}\left\{\left |\Delta r_{l,k}\right |, \ l=1,2; \ k=1,\ldots, K \right \}$, where $\Delta r_{l,k}$ are the remainder errors as defined in (\ref{eq:define_ri}). If $\tau < M/8$, then there exists one and only one subscript $k_0\in \{1,\ldots, K\}$ such that
\begin{equation}\label{d_k}
D_{k_0} + D_{k_0+K} > M/2.
\end{equation}
Moreover, if we let
\begin{eqnarray}\label{Omega_1_2}
\begin{split}
& \Omega_1 \triangleq \{\omega_1,\ldots,\omega_K\}=\left\{\tilde r_{\varsigma_{(k_0+1)}}^c, \ldots, \tilde r_{\varsigma_{(k_0+K)}}^c\right\},
\\
& \Omega_2 \triangleq  \{\upsilon_1,\ldots, \upsilon_K\} =
\begin{cases}
\left\{\tilde r_{\varsigma_{(1)}}^c,\ldots, \tilde r_{\varsigma_{(K)}}^c\right\}, & \text{if} \ k_0=K \\
\left\{\tilde r_{\varsigma_{(k_0+1+K)}}^c- M,\ldots,\tilde r_{\varsigma_{(2K)}}^c- M, \tilde r_{\varsigma_{(1)}}^c, \ldots, \tilde r_{\varsigma_{(k_0)}}^c\right\}, & \text{if} \ k_0 \ne K,
\end{cases}
\end{split}
\end{eqnarray}
with $\omega_i \le \omega_j$, $\upsilon_i \le \upsilon_j$ for $1\le i <j\le K$, then we have
\begin{equation}\label{d_Omega_t}
\omega_K -\omega_1 \le 2 \tau, \ \upsilon_K - \upsilon_1 \le 2 \tau.
\end{equation}
\end{lemma}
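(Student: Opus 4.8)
The plan is to work modulo $M$, where the $2K$ quantities $\tilde r^c_{l,k}$ live, and to exploit the fact that the true common remainders take only two values, $r_1^c=\langle N_1\rangle_M$ and $r_2^c=\langle N_2\rangle_M$. First I would show that each erroneous common remainder sits close to the one it came from: since $M\mid M_k$ and $\tilde r_{l,k}=r_{l,k}+\Delta r_{l,k}$, reducing modulo $M$ gives $\tilde r^c_{l,k}=\langle r_l^c+\Delta r_{l,k}\rangle_M$, so $\tilde r^c_{l,k}$ lies within circular distance $\tau$ of $r_l^c$ modulo $M$. Consequently the $2K$ points form two clusters indexed by $l=1,2$, each consisting of exactly $K$ points (counting the repeated value when $\tilde r_{1,k}=\tilde r_{2,k}$) and each contained in an arc of length at most $2\tau<M/4$.

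Uniqueness of $k_0$ is then immediate, and I would dispose of it first. The $K$ pair-sums $D_k+D_{k+K}$, $k=1,\dots,K$, partition the multiset $\{D_1,\dots,D_{2K}\}$, so by (\ref{sum_dk}) they add up to $\sum_{j=1}^{2K}D_j=M$; hence at most one of them can exceed $M/2$.

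The substance is existence, i.e.\ producing one pair-sum above $M/2$, and here I would split on whether the two arcs are disjoint. If they are disjoint, each cluster occupies $K$ cyclically consecutive positions of the sorted list (\ref{sequence_rc}) --- consecutive even when one cluster straddles the wrap-around point $0\equiv M$ --- so the two transitions between clusters are gaps whose indices differ by exactly $K$; all remaining gaps are internal to a single cluster and sum to the two cluster diameters, whence the two boundary gaps sum to $M-\mathrm{diam}_1-\mathrm{diam}_2\ge M-4\tau>M/2$, the last inequality being exactly where $\tau<M/8$ enters. If instead the arcs overlap, then all $2K$ points lie in a single arc of length at most $\delta+2\tau<4\tau<M/2$, where $\delta<2\tau$ is the circular distance between $r_1^c$ and $r_2^c$; its complementary arc is then a single gap exceeding $M/2$ and lying in some pair. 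In either situation the distinguished pair is $(D_{k_0},D_{k_0+K})$ with $k_0\in\{1,\dots,K\}$, matching the index bookkeeping in (\ref{Omega_1_2}) (including the $k_0=K$ branch).

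Finally, for the diameter bounds (\ref{d_Omega_t}) I would argue uniformly: cut the circle at the large gap(s) identified by $k_0$ and relabel the points linearly as $p_1\le\cdots\le p_{2K}$, so that $\Omega_1$ and $\Omega_2$ become $\{p_1,\dots,p_K\}$ and $\{p_{K+1},\dots,p_{2K}\}$ up to the $-M$ shifts written in (\ref{Omega_1_2}). Assuming without loss of generality $r_1^c\le r_2^c$ in this layout, the $K$ points of cluster $1$ lie in $[r_1^c-\tau,\,r_1^c+\tau]$ and those of cluster $2$ in $[r_2^c-\tau,\,r_2^c+\tau]$; counting how many points fall below each threshold yields $p_1\ge r_1^c-\tau$, $p_K\le r_1^c+\tau$, $p_{K+1}\ge r_2^c-\tau$, and $p_{2K}\le r_2^c+\tau$, so each block has diameter at most $2\tau$. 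I expect the main obstacle to be the disjoint case of the existence step: correctly showing that the two inter-cluster gaps are exactly $K$ apart in the circular order, with the wrap-around and coincident-remainder bookkeeping handled so that $k_0$ really lands in $\{1,\dots,K\}$ and lines up with the explicit $\Omega_1,\Omega_2$ in (\ref{Omega_1_2}). The genuinely geometric content is that neither boundary gap need individually exceed $M/2$, so pairing at index-distance $K$, rather than looking at a single largest gap, is essential.
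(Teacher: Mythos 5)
Your proposal is correct, and it reaches the lemma by a genuinely different route than the paper. The paper splits on the size of $\left|d_M(r_1^c,r_2^c)\right|$ (less than $M/4$ versus at least $M/4$) and then runs an explicit subcase analysis of how the two perturbed clusters wrap around $0$ and $M$ --- tracking integer offsets $\ell_i$ and the signs of $r_1^c+\Delta r_{\rho_{(k)}}$ and $r_2^c+\Delta r_{\pi_{(k)}}-M$ --- computing $D_{k_0}+D_{k_0+K}$ as an explicit combination of error terms in each of roughly eight configurations and writing out $\Omega_1,\Omega_2$ concretely each time. You instead split on whether the two arcs of radius $\tau$ around $r_1^c$ and $r_2^c$ are disjoint or overlapping, and give two unified arguments: in the disjoint case the two inter-cluster gaps sit at index distance exactly $K$ (because each cluster has exactly $K$ points) and sum to at least $M-4\tau>M/2$; in the overlapping case all $2K$ points fit in an arc of length under $4\tau<M/2$, so a single gap exceeds $M/2$. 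Your uniqueness argument via (\ref{sum_dk}) is the same as the paper's, and your order-statistics counting argument for (\ref{d_Omega_t}) (at least $K$ points lie below $r_1^c+\tau$ and at least $K$ lie above $r_2^c-\tau$ in the cut-open linear layout) is a clean abstraction of what the paper does concretely in (\ref{C_gamma}) and its analogues; notably it also covers the overlap case, where the two index blocks need not coincide with the true clusters. What your approach buys is brevity and immunity to wrap-around bookkeeping, since cutting the circle at the identified large gap handles all wrapping uniformly. What the paper's heavier approach buys is a reusable catalogue: the explicit per-configuration forms of $\Omega_1,\Omega_2$ in (\ref{set2_1}), (\ref{set2_21}), (\ref{set2_22}), (\ref{set1_1}), (\ref{set_a2})--(\ref{set_a5}) are cited directly in the proofs of Corollary \ref{Omega_M4} and Theorem \ref{Th_N_estimate}, so your proof of the lemma, while valid, would force those later proofs to be reworked in the same abstract style. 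One boundary detail worth tightening in your write-up: when the two arcs exactly touch ($d_M(r_1^c,r_2^c)=2\tau$), ties between points of different clusters can interleave the sorted order, but since the gaps $D_k$ depend only on the sorted values and your overlap argument already covers spans up to $4\tau$, this case is harmless.
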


This Lemma is proved in Appendix $A$.

\textbf{Example 3.} Let us consider the example proposed at the beginning of this section. By (\ref{sequence_rc}), we obtain the remainder sequence $ \{\tilde r_1^c,\ldots, \tilde r_{2K}^c\}=\{      8,9,92,99,97,7\}$ and its sorted sequence $\{\tilde r^c_{\varsigma_{(1)}}, \ldots, \tilde r^c_{\varsigma_{(2K)}}\}$ in (\ref{sequence_rc}) as
$$
0 < 7 < 8 < 9 < 92 < 97 < 99 < M=100.
$$
Since $D_3+D_6 =(92-9)+(7-99+100)= 83 + 8 > M/2$, we know that $k_0=3$ and obtain from (\ref{Omega_1_2}) that the two clusters are
\begin{equation}\label{Ex3_Omega}
\Omega_1=\{92, 97, 99\}, \ \Omega_2=\{7,8,9\}.
\end{equation}

Before getting the properties of the two clusters $\Omega_1$ and $\Omega_2$, we introduce a kind of circular distance below.
\begin{definition}
For real numbers $x$ and $y$, the circular distance of $x$ to $y$ for a non-zero positive number $C$ is defined as
\begin{equation} \label{eq:define_remainder_distance}
d_C(x,y) \buildrel \Delta \over =  x - y - \left [ \frac {x-y} {C}\right ] C,
\end{equation}
where $\left[\cdot \right]$ stands for the rounding integer,
i.e., for any $x \in \mathbb R$, where $\mathbb R$ denotes the set of all reals, $[x]$ is an integer and subject to
\begin{equation} \label{eq:define[.]}
-  {1}/ {2} \le x- [x]  <  {1} /{2}.
\end{equation}
\end{definition}

\begin{corollary}\label{Omega_M4}
Let $\tau = \text{max}\left\{\left |\Delta r_{l,k}\right |, \ l=1,2; \ k=1,\ldots, K \right \}$ and $\tau < M/8$. If $M/4 \le \left| d_M(r_1^c, r_2^c)\right|\le M/2$, then for every $k\in \{1,\ldots,K\}$, there exist $\omega_{k_1}$ in $\Omega_1$ and $\upsilon_{k_2}$ in $\Omega_2$ such that either
\begin{equation}\label{condi1}
d_M(\tilde r_{1,k}^c, \omega_{k_1})=0 \ \text{and} \ d_M(\tilde r_{2,k}^c, \upsilon_{k_2})=0
\end{equation}
or
\begin{equation}\label{condi2}
d_M(\tilde r_{2,k}^c, \omega_{k_1})=0 \ \text{and} \ d_M( \tilde r_{1,k}^c, \upsilon_{k_2})=0,
\end{equation}
where $k_1,k_2 \in \{1,\ldots,K\}$, $\Omega_1$ and $\Omega_2$ are defined in (\ref{Omega_1_2}).
\end{corollary}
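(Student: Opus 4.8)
The plan is to show that the two clusters $\Omega_1$ and $\Omega_2$ produced by Lemma~\ref{d_kM8} are exactly the two ``true'' clusters formed by the erroneous common remainders coming from $N_1$ and from $N_2$. Once this is established, the dichotomy (\ref{condi1})/(\ref{condi2}) is immediate: for each fixed $k$ the pair $\tilde r_{1,k}^c,\tilde r_{2,k}^c$ is then forced to split across the two clusters, and membership of a common remainder in a cluster means exactly $d_M(\cdot,\cdot)=0$ (the $-M$ shifts in the definition (\ref{Omega_1_2}) of $\Omega_2$ are absorbed by $d_M$, since $d_M(x,x-M)=0$).

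First I would record the elementary geometry on the circle of circumference $M$. Because $M_k=Mm_k$, each true remainder satisfies $r_{l,k}\equiv N_l \bmod M$, so $\langle r_{l,k}\rangle_M=r_l^c$ and hence $\tilde r_{l,k}^c\equiv r_l^c+\Delta r_{l,k}\bmod M$ with $|\Delta r_{l,k}|\le\tau<M/8$; thus $|d_M(\tilde r_{l,k}^c,r_l^c)|\le\tau$. Consequently the $K$ remainders $A_1=\{\tilde r_{1,k}^c\}_{k}$ all lie within circular distance $\tau$ of $r_1^c$, and $A_2=\{\tilde r_{2,k}^c\}_{k}$ all lie within $\tau$ of $r_2^c$, so each $A_l$ occupies a circular arc of length at most $2\tau$. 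Using the hypothesis $|d_M(r_1^c,r_2^c)|\ge M/4>2\tau$, I would then show that $A_1$ and $A_2$ lie in disjoint arcs: any cross pair has circular distance at least $M/4-2\tau>0$, so the two arcs are separated by two genuine gaps.

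The crux is to identify these two inter-cluster gaps with the pair $(D_{k_0},D_{k_0+K})$ selected in Lemma~\ref{d_kM8}. I would argue that, because $A_1$ and $A_2$ occupy two disjoint arcs of total length at most $4\tau$, the sorted cyclic list (\ref{sequence_rc}) consists of two contiguous blocks of $K$ points each (one block $A_1$, the other $A_2$), possibly with one arc straddling the cut at $0$; in every case the two inter-cluster gaps sit at cyclic positions that differ by exactly $K$, i.e.\ they are $D_{k_0}$ and $D_{k_0+K}$ for some $k_0\in\{1,\dots,K\}$. Their sum equals $M$ minus the total within-arc spread, hence at least $M-4\tau>M/2$. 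By the uniqueness part of Lemma~\ref{d_kM8}, this $k_0$ is the one used to define $\Omega_1,\Omega_2$, so $\{\Omega_1,\Omega_2\}=\{A_1,A_2\}$ and the conclusion follows as in the first paragraph.

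I expect the main obstacle to be exactly this identification step. A tempting shortcut---using only the spread bound $\omega_K-\omega_1\le 2\tau$ from Lemma~\ref{d_kM8} to argue that a single cluster cannot contain points of both $A_1$ and $A_2$---breaks down when $M/16\le\tau<M/8$, since there the separation $M/4-2\tau$ can be as small as the admissible spread $2\tau$, giving no contradiction. The positional argument together with the uniqueness of $k_0$ is what rescues the claim in this regime, and the bookkeeping for the wrap-around gap $D_{2K}$ and for matching the natural cut index to the range $k_0\in\{1,\dots,K\}$ is the part that will need care.
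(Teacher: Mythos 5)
Your proposal is correct, and it takes a genuinely different route from the paper. The paper's proof is a white-box argument: it re-opens Case 2 of the proof of Lemma \ref{d_kM8}, where the clusters were already computed explicitly in five subcases (equations (\ref{set1_1}), (\ref{set_a2}), (\ref{set_a3}), (\ref{set_a4}), (\ref{set_a5})) as the perturbed copies $\{r_1^c+\Delta r_{\rho_{(k)}}\ (+M)\}_k$ and $\{r_2^c+\Delta r_{\pi_{(k)}}\ (-M)\}_k$ in some order; the corollary then follows in one representative subcase by matching each $\Delta r_{1,k}$ with some $\Delta r_{\rho_{(k_1)}}$ and invoking $d_M(x,x\pm M)=0$. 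Your argument instead treats Lemma \ref{d_kM8} as a black box, using only its statement: the two groups $A_1,A_2$ of erroneous common remainders lie in disjoint arcs of length at most $2\tau$ (separation at least $M/4-2\tau>0$), so cyclically they form two blocks of exactly $K$ points whose two inter-block gaps sit at positions differing by exactly $K$ and sum to at least $M-4\tau>M/2$; the \emph{uniqueness} of $k_0$ in the lemma then forces the clusters of (\ref{Omega_1_2}) to coincide with $A_1,A_2$ modulo $M$. What the paper's route buys is brevity, since the case analysis was already paid for inside the lemma's proof; what yours buys is modularity (the corollary no longer depends on the internals of that proof, only on its statement), a cleaner geometric picture, and the elimination of the five-subcase bookkeeping — at the cost of having to make the positional argument precise, including the wrap-around gap $D_{2K}$. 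Your closing observation is also accurate and worth keeping: a naive comparison of the spread bound $2\tau$ against the separation $M/4-2\tau$ only yields a contradiction when $\tau<M/16$, so in the regime $M/16\le\tau<M/8$ the uniqueness of $k_0$ (or, in the paper, the explicit construction) is doing essential work.
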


The proof of this corollary is in Appendix $B$.

Based on the two clusters $\Omega_1$ and $\Omega_2$, we can estimate the two common remainders $\left\{r_1^c,r_2^c\right\}$ firstly. Let
\begin{equation}\label{omega_prime}
\omega_k ^\prime =
\begin{cases}
\omega_k, & \text{if} \ \omega_K -\upsilon_1 \le M/2 \\
\omega_k-M, & \text{if} \ \omega_K -\upsilon_1 > M/2,
\end{cases}
\end{equation}
for all $k$, where $k\in \{1,\ldots, K \}$. Then, the two common remainders $\left\{r_1^c,r_2^c\right\}$ can be estimated as $\{\overline \omega_1, \overline \omega_2\}$:
\begin{equation}\label{overline_omega}
\overline \omega_1\triangleq \frac {\omega_1^\prime+\cdots+\omega_K^\prime}{K}, \ \overline \omega_2\triangleq \frac {\upsilon_1+\cdots+\upsilon_K}{K}.
\end{equation}
Note that $\overline \omega_1$ and $\overline \omega_2$ defined in (\ref{overline_omega}) may be negative values. After cancelling the appropriate estimate of common remainder from the erroneous remainders
$\tilde r_{l,k}$, we can obtain the estimates of integers $q_{l,k}$ in (\ref{equation_star}), denoted as $\hat q_{l,k}$:
\begin{equation}\label{calcu_q_k}
\hat q_{l,k}=\left[\frac {\tilde r_{l,k}-\overline \omega_t}{M}\right], \ l=1,2; \ k=1,\ldots, K,
\end{equation}
where $\left[\cdot \right]$ is the rounding operation, and $t$ is
\begin{eqnarray}\label{def_t}
t=
\begin{cases}
1, & \text{if} \ d_M(\tilde r_{l,k}^c, \omega_{k_1})=0 \ \text{for some} \ k_1 \\
2, & \text{if} \ d_M(\tilde r_{l,k}^c, \upsilon_{k_2})=0\ \text{for some} \ k_2,
\end{cases}
\end{eqnarray}
with $k_1,k_2 \in \{1,\ldots, K \}$. Let
\begin{equation}\label{hat_Q}
 R_k(\hat Q_1, \hat Q_2)=\left\{\hat q_{1,k}, \hat q_{2,k}\right\}, \ k=1,\ldots, K.
\end{equation}
Then, the two estimates $\{\hat Q_1, \hat Q_2 \}$ of the integers $\{Q_1, Q_2\}$ can be reconstructed from their residue sets $R_1(\hat Q_1, \hat Q_2)$, $\ldots, R_K(\hat Q_1, \hat Q_2)$ modulo $\mathcal M'$ by using the generalized CRT for two integers obtained in \cite{wangwei2014}.

\textbf{Example 4.} Let us consider Example 3. Since $\omega_3 - \upsilon_1 =99-7= 92 > M/2$, we obtain
$$
\omega_1^\prime = -8, \ \omega_2 ^\prime = -3, \ \omega_3 ^\prime = -1.
$$
Recall that $\upsilon_1=7$, $\upsilon_2=8$, and $\upsilon_3=9$. According to the definitions of $\overline \omega_1$ and $\overline \omega_2$ in (\ref{overline_omega}), we have
$$
\overline \omega_1 = -4, \ \overline\omega_2= 8.
$$
By (\ref{calcu_q_k}) and (\ref{hat_Q}), we obtain
$$
R_1(\hat Q_1, \hat Q_2)=\left\{1, 2\right\}, R_2(\hat Q_1, \hat Q_2)=\left\{1, 4\right\}, R_3(\hat Q_1, \hat Q_2)=\left\{4,5\right\}.
$$
By using the generalized CRT for two integers obtained in \cite{wangwei2014}, we have
$$
\{\hat Q_1, \hat Q_2\}=\{11,19\}.
$$

%
%


Now, we estimate the two integers $\{N_1, N_2\}$ after the estimates $\{\overline \omega_1,\overline \omega_2\}$ of the two common remainders and $\{\hat Q_1, \hat Q_2\}$ are obtained. The estimates of $\{N_1,N_2\}$ are denoted as $\{\hat N_1, \hat N_2\}$ in the following.

\noindent 1) $\hat Q_1= \hat Q_2=\hat Q$.

In this case, the estimates $\{\hat N_1, \hat N_2\}$ can be reconstructed as
\begin{equation}\label{estimate_NAB_eq}
\{\hat N_1, \hat N_2\} =  \{M \hat Q + \overline \omega_1, M \hat Q + \overline \omega_2\}.
\end{equation}

\noindent 2) $\hat Q_1 \ne \hat Q_2$.

In this case, we can not determine $\{\hat N_1, \hat N_2\}$ from $\{\overline \omega_1,\overline \omega_2\}$ and $\{\hat Q_1, \hat Q_2\}$, which is because the correspondence between the elements in two sets $\{\overline \omega_1,\overline \omega_2\}$ and $\{\hat Q_1, \hat Q_2\}$ is not known. To be specific, we cannot determine whether $\{\hat N_1, \hat N_2\}$ are $\{M\hat Q_1 + \overline \omega_1, M\hat Q_2 + \overline \omega_2\}$ or $\{M\hat Q_1 + \overline \omega_2, M\hat Q_2 + \overline \omega_1\}$. Next, we modify the two estimates $\{\overline \omega_1, \overline \omega_2\}$ of the common remainders $\{r_1^c,r_2^c\}$ so that the modified estimates $\hat r_1^c$ and $\hat r_2^c$ correspond to $\hat Q_1$ and $\hat Q_2$, respectively. The main processes are two: Firstly, we select the elements from $\{\omega'_1, \ldots, \omega'_K\}$ and $\{\upsilon_1,\ldots, \upsilon_K\}$ to form two groups, where all the elements in one group correspond to $\hat Q_1$ and the other correspond to $\hat Q_2$. Then, $\hat r_1^c$ and $\hat r_2^c$ are determined by averaging the groups corresponding to $\hat Q_1$ and $\hat Q_2$, respectively.

Recall that the estimates $\{\overline \omega_1, \overline \omega_2\}$ of the two common remainders defined in (\ref{overline_omega}) are the average values of the two clusters $\{\omega'_1, \ldots, \omega'_K\}$ and $\{\upsilon_1,\ldots, \upsilon_K\}$. For convenience, we let
\begin{equation}\label{def_Omega_prime}
\Omega' \triangleq \{\omega'_1, \ldots, \omega'_K, \upsilon_1,\ldots, \upsilon_K\}.
\end{equation}
By (\ref{Omega_1_2}) and (\ref{omega_prime}), we know that the elements in $\Omega'$ are either $\tilde r_{\varsigma_{(i)}}^c$ or $\tilde r_{\varsigma_{(i)}}^c-M$ for all $i=1,\ldots, 2K$. From the definitions of $\tilde r_{\varsigma_{(i)}}^c$ in (\ref{sequence_rc}), we know that $\{\tilde r_{\varsigma_{(1)}}^c, \ldots, \tilde r_{\varsigma_{(2K)}}^c\}$ are the $2K$ sorted common remainders from $\{\tilde r_{1,1}^c, \ldots, \tilde r_{1,K}^c,\tilde r_{2,1}^c, \ldots, \tilde r_{2,K}^c\}$. Hence, for all $l=1,2; k=1,\ldots,K$, either $\tilde r_{l,k}^c$ or $\tilde r_{l,k}^c-M$ is included in $\Omega'$, and in the meanwhile, $\Omega'$ only consists of these $2K$ elements. In the following, for convenience, we call both $\tilde r_{l,k}^c$ and $\tilde r_{l,k}^c-M$ as the common remainders of $\tilde r_{l,k}$.

When $\hat Q_1 \ne \hat Q_2$, we know from the traditional CRT that there exists at least a subscript $k\in\{1,\ldots, K\}$ such that
\begin{equation}\label{q_ne_k}
\hat q_{1,k} \ne \hat q_{2,k}.
\end{equation}
Let $\mathcal K \triangleq \{k_1,\ldots,k_p\}$, $1 \le k_i \le K$, be all the distinct subscripts of $\hat q_{1,k_i}$ (or $\hat q_{2,k_i}$) satisfying (\ref{q_ne_k}), i.e., $\hat q_{1,k_i} \ne \hat q_{2,k_i}$, and thus from (\ref{q_ne_k}), we have $p\ge 1$. When $\hat q_{1,k_i} \ne \hat q_{2,k_i}$,  the correspondence between $\{\hat q_{1,k_i}, \hat q_{2,k_i}\}$ and $\{\hat{Q}_1, \hat{Q}_2\}$ is known because
we can determine $\hat q_{l,k_i}$ by the obtained integers $\hat Q_l$ modulo $m_{k_i}$, i.e.,
\begin{equation}
\hat q_{l,k_i} = \langle \hat Q_l \rangle_{m_{k_i}}, \ l=1,2,
\end{equation}
for every $i$, $1\le i \le p$. Note that the obtained values $\hat q_{1,k_i}$ and $\hat q_{2,k_i}$ above are the same the values as determined by (\ref{calcu_q_k}) from the residue set $\{\tilde r_{1,k_i},\tilde r_{2,k_i}\}$. From $\hat q_{1,k_i} \ne \hat q_{2,k_i}$, we deduce that $\tilde r_{1,k_i} \ne \tilde r_{2,k_i}$. Thus, the correspondence between $\{\hat q_{1,k_i}, \hat q_{2,k_i}\}$ and $\{\tilde r_{1,k_i}, \tilde r_{2,k_i}\}$ (\text{or} $\{\hat{Q}_1, \hat{Q}_2\}$) is known as well. Assume that the common remainders of $\tilde r_{1,k_i}$ and $\tilde r_{2,k_i}$ in $\Omega'$ are $\hat r_{1,k_i}^c$ and $\hat r_{2,k_i}^c$, respectively. As discussed above, $\hat r_{1,k_i}^c$  are either $\tilde r_{1,k_i}^c$ or $\tilde r_{1,k_i}^c-M$, and $\hat r_{2,k_i}^c$  are either $\tilde r_{2,k_i}^c$ or $\tilde r_{2,k_i}^c-M$. Thus, $\hat r_{1,k_i}^c$ and $\hat r_{2,k_i}^c$ can be determined by
\begin{equation}\label{r_sigma_eta}
d_M(\tilde r_{1,k_i}^c, \hat r_{1,k_i}^c)=0, \ d_M(\tilde r_{2,k_i}^c, \hat r_{2,k_i}^c)=0, \ \hat r_{1,k_i}^c, \hat r_{2,k_i}^c \in \Omega',
\end{equation}
where $k_i \in \mathcal K$. By (\ref{r_tk}), we know that $\hat r_{1,k_i}^c$ and $\hat r_{2,k_i}^c$ can also be determined by
\begin{equation}\label{r_sigma_eta_2}
d_M(\tilde r_{1,k_i}, \hat r_{1,k_i}^c)=0, \ d_M(\tilde r_{2,k_i}, \hat r_{2,k_i}^c)=0, \ \hat r_{1,k_i}^c, \hat r_{2,k_i}^c \in \Omega'.
\end{equation}
Clearly, $\hat r_{1,k_i}^c$ and $\hat r_{2,k_i}^c$ correspond to the remainders $\tilde r_{1,k_i}$ and $\tilde r_{2,k_i}$, respectively. Hence, $\hat r_{1,k_i}^c$ corresponds to $\hat Q_1$, while $\hat r_{2,k_i}^c$  corresponds to $\hat Q_2$. We use the average common remainders of $\{\hat r_{1,k_1}^c, \ldots, \hat r_{1,k_p}^c\}$ and $\{\hat r_{2,k_1}^c, \ldots, \hat r_{2,k_p}^c \}$ as the estimates of $r^c_1$ and $r^c_2$, respectively, i.e.,
\begin{equation}\label{rc_AB}
\hat r^c_{1}  \triangleq \frac {\hat r_{1,k_1}^c+\cdots +\hat r_{1,k_p}^c}{p}, \ \hat r^c_{2} \triangleq \frac {\hat r_{2,k_1}^c+\cdots+\hat r_{2,k_p}^c}{p}.
\end{equation}
Then, the estimates $\{\hat N_1, \hat N_2\}$ can be reconstructed as
\begin{equation}\label{estimate_NAB}
\{\hat N_1, \hat N_2\} = \{M \hat Q_1 + \hat r^c_{1}, M \hat Q_2 + \hat r^c_{2}\}.
\end{equation}

Noting that the estimates $\{\hat N_1, \hat N_2\}$ obtained by (\ref{estimate_NAB_eq}) or (\ref{estimate_NAB}) may be non-integers. For this case, we use $\big\{[\hat N_1], [\hat N_2]\big\}$ as the estimates of the integers $\{N_1, N_2\}$, where $[\cdot]$ denotes the rounding operation defined in (\ref{eq:define[.]}).

\textbf{Example 5.} Let us consider Example $4$. Note that $\{\hat Q_1, \hat Q_2\}=\{11,19\}$ calculated before. Then, the remainders of $\hat Q_1=11$ and $\hat Q_2=19$ modulo $\mathcal M'=\{3,5,7\}$ are $\{\hat q_{1,1}, \hat q_{1,2}, \hat q_{1,3}\}=\{2,1,4\}$ and $\{\hat q_{2,1}, \hat q_{2,2}, \hat q_{2,3}\}=\{1,4,5\}$, respectively. Clearly, $\hat q_{1,1} \ne \hat q_{2,1}$, $\hat q_{1,2} \ne \hat q_{2,2}$, and $\hat q_{1,3} \ne \hat q_{2,3}$. Recall that the erroneous residue sets are $\tilde R_1( N_1,  N_2)=\{108, 209\}$, $\tilde R_2( N_1,  N_2)=\{92, 399\}$, and $\tilde R_3( N_1, N_2)=\{397, 507\}$. According to (\ref{calcu_q_k}), we deduce that $\{\tilde r_{1,1},  \tilde r_{1,2}, \tilde r_{1,3}\}=\{209, 92, 397\}$, $\{\tilde r_{2,1}, \tilde r_{2,2},  \tilde r_{2,3}\}=\{108, 399, 507\}$. By (\ref{def_Omega_prime}), we have
$$
\Omega^\prime=\{\omega_1^\prime, \omega_2^\prime, \omega_3^\prime, \upsilon_1, \upsilon_2, \upsilon_3\}=\{-8,-3, -1,7,8,9\}.
$$
Note that
\begin{eqnarray*}
\begin{split}
 d_M(\tilde r_{1,1},9)=0,\
 d_M(\tilde r_{1,2},-8)=0,\
 d_M(\tilde r_{1,3},-3)=0.
\end{split}
\end{eqnarray*}
By (\ref{r_sigma_eta_2}), we obtain that the common remainders $\hat r_{1,1}^c$, $\hat r_{1,2}^c$, and $\hat r_{1,3}^c$ are
$$
\hat r_{1,1}^c=9,\ \hat r_{1,2}^c=-8,\ \hat r_{1,3}^c=-3.
$$
From (\ref{rc_AB}), we obtain
$$
\hat r_{1}^c = -2/3.
$$
Similarly, we have
\begin{eqnarray*}
 d_M(\tilde r_{2,1},8)=0,\
 d_M(\tilde r_{2,2},-1)=0,\
 d_M(\tilde r_{2,3},7)=0.
\end{eqnarray*}
Hence, 
$$
\hat r_{2,1}^c=8,\ \hat r_{2,2}^c=-1,\ \hat r_{2,3}^c=7,
$$
and then we obtain from (\ref{rc_AB}) that
$
\hat r_{2}^c = 14/3.
$
By (\ref{estimate_NAB}), we have
$$
\{\hat N_1, \hat N_2\}=\Big\{1099\frac{1}{3}, 1904\frac{2}{3}\Big\}.
$$
Therefore, the estimates of the two integers $\{N_1,N_2\}$ are $\{1099, 1905\}$. Note that the true values of the two integers are $\{1098,1898\}$.

Next theorem shows that the above estimates $\{\hat N_1, \hat N_2\}$ of the two integers $\{N_1, N_2\}$ are robust when the remainder error bound is less than $M/8$.

\begin{theorem}\label{Th_N_estimate}
Let $\tau = \text{max}\left\{\left |\Delta r_{l,k}\right |,\ l=1,2; \ k=1,\ldots, K \right \}$, where $\Delta r_{l,k}$ are the remainder errors as defined in (\ref{eq:define_ri}). If $\tau < M/8$, then we have
\begin{equation}\label{N_robust}
\big|\hat N_l - N_l \big| \le \tau, \ l=1,2,
\end{equation}
where $\{\hat N_1, \hat N_2\}$ are defined in (\ref{estimate_NAB_eq}) or (\ref{estimate_NAB}).
\end{theorem}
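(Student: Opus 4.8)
The plan is to reduce the bound (\ref{N_robust}) to two independent claims, exploiting the decomposition $N_l = MQ_l + r_l^c$ that underlies the whole construction. Writing the estimate in the same form, $\hat N_l = M\hat Q_l + (\text{common-remainder estimate})$, I would first show that the integer part is recovered \emph{exactly}, i.e. $\{\hat Q_1,\hat Q_2\}=\{Q_1,Q_2\}$, and then show that the residual error $|\hat N_l - N_l|$ is carried entirely by the common-remainder estimate, which I will bound by $\tau$ through the averaging in (\ref{overline_omega}) or (\ref{rc_AB}). Throughout I use the standing hypothesis that the integers lie in the dynamic range $Md$ of Theorem \ref{D_M_prime}, so that the generalized CRT of \cite{wangwei2014} applies in the $Q$-domain.

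For the first claim I would prove that each quantity computed in (\ref{calcu_q_k}) is correct, namely $\hat q_{l,k}=q_{l,k}$ for all $l,k$. Substituting $\tilde r_{l,k}=Mq_{l,k}+r_l^c+\Delta r_{l,k}$ into (\ref{calcu_q_k}), the rounding returns $q_{l,k}$ as soon as $|r_l^c-\overline\omega_t|+|\Delta r_{l,k}|<M/2$, where $\overline\omega_t$ is the cluster average matched to $\tilde r_{l,k}$ via (\ref{def_t}). Here I would split on the circular separation $|d_M(r_1^c,r_2^c)|$ of the two true common remainders. When $|d_M(r_1^c,r_2^c)|\ge M/4$, Corollary \ref{Omega_M4} guarantees the matching is the correct one, so $|r_l^c-\overline\omega_t|\le\tau$ and the perturbation is at most $2\tau<M/4$. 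When $|d_M(r_1^c,r_2^c)|<M/4$ the matching may point to the \emph{wrong} common remainder, but then $|r_l^c-\overline\omega_t|\le|d_M(r_1^c,r_2^c)|+\tau<M/4+\tau$, so the total perturbation is at most $M/4+2\tau<M/2$, again forcing exact rounding. In either case $\hat q_{l,k}=q_{l,k}$, hence $\{\hat q_{1,k},\hat q_{2,k}\}=R_k(Q_1,Q_2)$ for every $k$, and the generalized CRT yields $\{\hat Q_1,\hat Q_2\}=\{Q_1,Q_2\}$.

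For the second claim I use $\hat Q_l=Q_l$, so that $\hat N_l-N_l$ reduces to the difference between the common-remainder estimate and $r_l^c$. In the branch $\hat Q_1\neq\hat Q_2$ the correspondence between the $\tilde r_{l,k_i}$ and the integers is pinned down by the residues that actually differ ($\hat q_{1,k_i}\neq\hat q_{2,k_i}$), so each matched representative $\hat r_{l,k_i}^c$ in (\ref{r_sigma_eta_2}) satisfies $d_M(\hat r_{l,k_i}^c,r_l^c)=\Delta r_{l,k_i}$; averaging in (\ref{rc_AB}) then gives $|\hat r_l^c-r_l^c|=\big|\tfrac1p\sum_i\Delta r_{l,k_i}\big|\le\tau$, and (\ref{estimate_NAB}) yields $|\hat N_l-N_l|\le\tau$. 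In the branch $\hat Q_1=\hat Q_2$ (so $Q_1=Q_2$ and the two integers differ only through $r_l^c$), the estimate (\ref{estimate_NAB_eq}) uses the cluster averages $\overline\omega_1,\overline\omega_2$, and I would argue that, although the two clusters may contain points from both true common remainders, the spread bound (\ref{d_Omega_t}) of Lemma \ref{d_kM8} confines each cluster to the $\tau$-neighbourhood of a single $r_l^c$: a counting argument shows that at least $K$ of the $2K$ common remainders lie below $r_1^c+\tau$ and at least $K$ lie above $r_2^c-\tau$, so the lower cluster sits in $[r_1^c-\tau,r_1^c+\tau]$ and the upper in $[r_2^c-\tau,r_2^c+\tau]$; hence $|\overline\omega_l-r_l^c|\le\tau$ and again $|\hat N_l-N_l|\le\tau$.

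The main obstacle is precisely this last, degenerate regime, where $|d_M(r_1^c,r_2^c)|$ is so small (comparable to $2\tau$) that the two residue bands overlap and the clustering of Lemma \ref{d_kM8} genuinely mixes the two common remainders. A naive bound on the mixed cluster averages only gives an error of order $3\tau$, and the tight constant $\tau$ is recovered only by exploiting that Lemma \ref{d_kM8} forces each cluster to have spread at most $2\tau$, after which the counting argument above pins the averages back to within $\tau$. Two further points demand care and are where every inequality must be tracked: the circular/wrap-around bookkeeping induced by the adjustments (\ref{omega_prime}) and by the representatives chosen in $\Omega'$, so that $\overline\omega_t$ and $r_l^c$ are compared on the same branch modulo $M$; and the tightness of the hypothesis itself, since the exact-rounding inequality $M/4+2\tau<M/2$ is equivalent to $\tau<M/8$, leaving no slack.
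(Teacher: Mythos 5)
Your proof has a genuine gap, and it is in the step the whole argument rests on: the claim that the integer parts are recovered exactly, i.e.\ $\hat q_{l,k}=q_{l,k}$ for all $l,k$ and hence $\{\hat Q_1,\hat Q_2\}=\{Q_1,Q_2\}$. This is false, and the paper's own running example is a counterexample: with $M=100$, moduli $\{300,500,700\}$, $\{N_1,N_2\}=\{1098,1898\}$ (so $r_1^c=r_2^c=98$ and $\{Q_1,Q_2\}=\{10,18\}$, Example 2) and the erroneous residue sets of Example 3 (where $\tau=11<M/8$), the algorithm produces $\overline\omega_1=-4$, $\overline\omega_2=8$ and $\{\hat Q_1,\hat Q_2\}=\{11,19\}=\{Q_1+1,Q_2+1\}$ (Example 4); e.g.\ $\hat q_{1,1}=\left[(209-8)/100\right]=2=q_{1,1}+1$. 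The flaw is the inequality $|r_l^c-\overline\omega_t|\le |d_M(r_1^c,r_2^c)|+\tau$: this bounds a \emph{circular} distance, whereas the rounding in (\ref{calcu_q_k}) acts on ordinary real differences. Whenever a true common remainder lies within $\tau$ of the wrap-around point (here $98$ is within $\tau$ of $M$), the shifts in (\ref{Omega_1_2}) and (\ref{omega_prime}) place the relevant cluster average near $r_l^c-M$, so $r_l^c+\Delta r_{l,k}-\overline\omega_t\approx M$ and the rounding returns $q_{l,k}+1$, not $q_{l,k}$. Note this is not confined to your ``degenerate'' regime $|d_M(r_1^c,r_2^c)|<M/4$: the same wrap-around failure occurs in your first case as well (Corollary \ref{Omega_M4} only gives circular matching, $d_M(\tilde r^c_{l,k},\omega_{k_1})=0$, never a real-line bound on $r_l^c-\overline\omega_t$). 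Since your second step explicitly invokes $\hat Q_l=Q_l$ to reduce $|\hat N_l-N_l|$ to the common-remainder error, the argument collapses precisely in the regime that carries all the difficulty.

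What the paper proves instead is a \emph{compensation} statement, not exact recovery: whenever the branch choice forces $\hat q_{l,k}=q_{l,k}+1$ (hence a $+1$ shift in the corresponding $\hat Q_l$), the matched common-remainder estimate is simultaneously forced onto the shifted branch, $\overline\omega_t$ (or $\hat r_l^c$) $=r_l^c-M+\epsilon_l$ with $|\epsilon_l|\le\tau$, so that $\hat N_l=M\hat Q_l+\hat r_l^c=N_l+\epsilon_l$ and the $+M$ and $-M$ cancel. Establishing that the shift occurs consistently in both factors is exactly what the case analysis (subcases \textcircled{1}--\textcircled{4} and the corresponding $\ell_i$ bookkeeping in Appendix C) does, and it uses sandwich bounds of the kind you propose: your counting argument showing each cluster average is trapped within $\tau$ of a single (branch-adjusted) common remainder is essentially the paper's sound ingredient, but it bounds the estimate relative to $r_l^c-\delta_l M$ for some $\delta_l\in\{0,1\}$, not relative to $r_l^c$ itself. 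If you want to keep your two-step structure, the first claim must be weakened to: there exist $\delta_l\in\{0,1\}$ with $\hat Q_l=Q_l+\delta_l$ and $\hat r_l^c=r_l^c-\delta_l M+\epsilon_l$, $|\epsilon_l|\le\tau$, after which (\ref{estimate_NAB_eq}) and (\ref{estimate_NAB}) yield (\ref{N_robust}). Also, the constant $M/8$ does not come from your rounding inequality $M/4+2\tau<M/2$; it is what makes the gap identification in Lemma \ref{d_kM8} work (one needs $M-4\tau>M/2$ so that a unique pair $D_{k_0}+D_{k_0+K}>M/2$ exists), which your proof never uses for that purpose.
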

The proof of this theorem is in Appendix $C$.

Let us recall the example presented at the beginning of this section. Note that the remainder error bound $\tau=11$, which is less than the robustness error upper bound $M/8=12.5$. By Theorem \ref{Th_N_estimate}, we know that the estimates are robust. In fact, according to Example $5$, the maximal estimation error of the two integers is $7$, which is small than the remainder error bound $\tau$ and conforms the result obtained in Theorem \ref{Th_N_estimate}.

\subsection{Robust Generalized CRT Algorithm for Two Integers}

To summarize what we have studied before, we obtain the following robust generalized CRT algorithm for two integers.

\noindent \textbf{Robust Generalized CRT for Two Integers}

\noindent\textbf{Step 1} Calculate $\tilde r^c_{l,k}$ in (\ref{r_tk}) and sort them in the increasing order as (\ref{sequence_rc}).

\noindent\textbf{Step 2} Compute $k_0$ as
\begin{equation}
k_0 = \arg \max_{k \in \left\{1,\ldots, K \right\}}\left\{ D_k + D_{k+K} \right\},
\end{equation}
where $D_k$ is defined in (\ref{dis_d}).

\noindent\textbf{Step 3} Obtain the two clusters $\Omega_1$ and $\Omega_2$ by (\ref{Omega_1_2}).

\noindent\textbf{Step 4} Calculate $\overline \omega_1$ and $\overline \omega_2$ by (\ref{overline_omega}).

\noindent\textbf{Step 5} Determine residue sets $R_k(\hat Q_1, \hat Q_2)$ as
\begin{equation}
R_k(\hat Q_1, \hat Q_2)=\left\{\hat q_{1,k}, \hat q_{2,k}\right\},
\end{equation}
where $\hat q_{l,k}$ are defined in (\ref{calcu_q_k}).


\noindent\textbf{Step 6} Reconstruct $\{\hat Q_{1},\hat Q_{2}\}$ by using the generalized CRT for two integers obtained in \cite{wangwei2014}.

\noindent\textbf{Step 7} Reconstruct $\{\hat N_1, \hat N_2\}$ by (\ref{estimate_NAB_eq}) or (\ref{estimate_NAB}).

Although the above robust generalized CRT is for two integers, it is straightforward to be generalized to two reals as the case of one integer in our previous work \cite{wjwang}, \cite{wangwj_14}.

\section{Simulation Results}\label{Fifth}
In this section, we show some simulations to illustrate the performance of the proposed robust generalized CRT for two integers and its application in two frequency determination from multiple undersampled waveforms.

Let us first consider the estimation error versus the error upper bound for the proposed robust generalized CRT for two integers. By Theorem \ref{Th_N_estimate}, we know that the maximal error level $\tau$ needs to be upper bounded by $\tau < M/8$ for the robustness. In the simulation, parameter $M=100$, and the co-prime integers from $m_1$ to $m_3$ are $3$, $5$, and $7$, respectively. Two unknown integers $\{N_1, N_2\}$ are chosen uniformly at random from the interval $[0, 2000)$ and the maximal error levels are set as $\tau=0, 3, 6, 9, 12, 15$. For these maximal error levels, the last one, $15$, does not satisfy the robustness upper bound $\tau< M/8=100/8=12.5$. We call the process of determining $\{N_1, N_2\}$ as a trial, and $10000$ trials for each of the maximal error level are simulated. In Fig. \ref{Bound}, we present the curve of the mean error $E_N$ versus the maximal error level $\tau$. The mean error is defined as
\begin{equation}
E_N = E_{trials}\bigg\{\frac{1}{2} \sum_{l=1,2} |\hat{N}_l -N_l |\bigg\},
\end{equation}
where $E_{trials}$ stands for the mean over all the trials, $N_{l}$ and $\hat N_{l}$ are the true integers and the estimates in one trial, respectively. Fig. \ref{Bound} shows that the two integers can be robustly reconstructed from their erroneous residue sets by using the proposed robust generalized CRT for two integers, i.e., when all the errors of the remainders are less than the error upper bound, the reconstruction errors of $\{N_1, N_2\}$ are also less than this bound. It also shows that the reconstruction errors of the two integers are small compared to their dynamic range. However, when the robustness upper bound $\tau<M/8$ is not satisfied as when $\tau=15$, as one can see from Fig. \ref{Bound}, the robustness may not hold anymore.


For the application in two frequency determination from multiple undersampled waveforms, we set three sampling frequencies: $Mm_1$, $Mm_2$, and $Mm_3$. Two frequencies $\{f_1, f_2\}$ are taken integers randomly and uniformly distributed in the range $\left(0, 2M\sqrt{m_1m_2m_3}\right)$. The noise $w(t)$ in (\ref{signal}) is additive white Gaussian noise with mean zero and variance $10^{-SNR/10}$, and the number of trials is $10000$ for each signal-to-noise ratio (SNR). In the simulation, the observation of the time duration is $1$s. Three methods are considered: the (optimal) searching based method, the proposed robust generalized CRT for two integers and the non-robust generalized CRT for two integers.
In the searching based method, we search the proper folding integers $\hat n_{l,k}$ that corresponds to the remainders $\tilde r_{l,k}$ from all the possible integers. Then the two frequencies are estimated as
\begin{equation}
\hat f_l =\Bigg[\frac{1}{K} \sum_{k=1}^{K}\hat n_{l,k}Mm_k + \tilde r_{l,k} \Bigg], \ l=1,2.
\end{equation}
In non-robust generalized CRT for two integers, we choose the average of the common remainders by using an arbitrary grouping, and then reconstruct the different integers by using the algorithm proposed in Section \ref{Third}. In Fig. \ref{non_robust_R1}, we compare the different methods by investigating the mean relative error $E_f$ versus SNR of the two estimated frequencies for different $M$ and $m_i$. The mean relative error is defined as
\begin{equation}
E_f =  E_{trials}\Bigg\{\frac{1}{2} \sum_{l=1,2} \frac{|\hat{f}_l - f_l |}{f_l}\Bigg\},
\end{equation}
where $f_{l}$ and $\hat f_{l}$ are the true frequencies and the estimates in one trial, respectively. Fig. \ref{non_robust_R1} shows that the non-robust generalized CRT for two integers suffers from the error floor problem, i.e., the mean relative error will not decrease or decrease very slowly at high SNR. On the contrary, the mean relative error of the robust generalized CRT and the searching based method for two integers decreases sharply as SNR increases all the time. The proposed robust generalized CRT performs slightly worse than the searching based method, but has a much less computation. In fact, the computational complexity of the searching method and our proposed method are in the order of $2^K\Gamma^{2(K-1)}$ and $6 K^2$, respectively. Fig. \ref{non_robust_R1} also shows that while the other parameters, such as the sampling rates, are similar, the larger $M$ is, the better reconstruction is, i.e., the better performance is, which is in agreement with our theory.

\begin{figure}[h]
\centering
\includegraphics[width=3.5in]{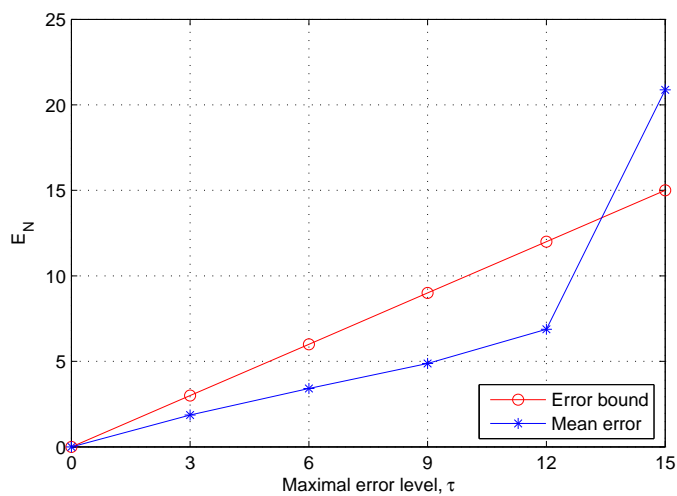}
\captionsetup{font={scriptsize}}
\captionsetup{labelsep=period}
\caption{Estimation errors and the obtained estimation error upper bound using the robust generalized CRT for two integers.} \label{Bound}
\end{figure}

\begin{figure}[h]
\centering
\setlength{\belowcaptionskip}{-.5cm}
\includegraphics[width=3.5in]{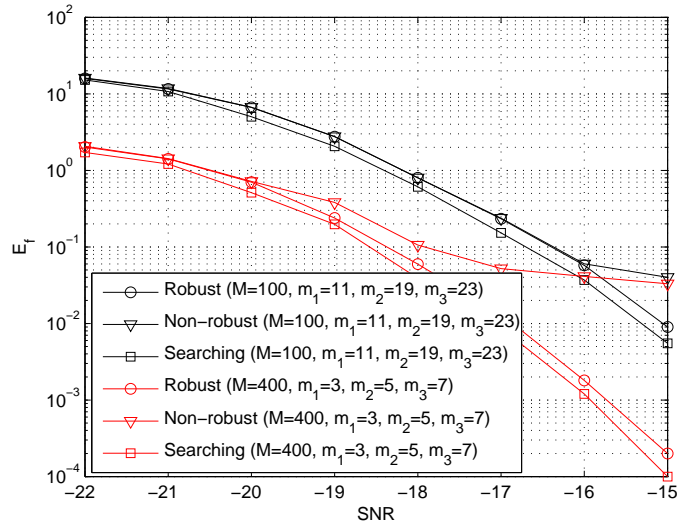}
\captionsetup{font={scriptsize}}
\captionsetup{labelsep=period}
\caption{Mean relative error versus SNR of the two estimated frequencies.} \label{non_robust_R1}
\end{figure}


\section{Conclusion} \label{Sixth}
In this paper, we studied a robust generalized CRT for determining two integers from their residue sets and moduli, where the remainders of the two integers in each residue set are not ordered and may have errors. We first obtained the largest dynamic range of two integers from their error free residue sets of a given modulus set, where all the moduli have a gcd $M$ larger than $1$ and the remaining integers factorized by the gcd of all the moduli are pairwise co-prime. We also presented an efficient reconstruction algorithm of two integers from their error free residue sets, when the two integers are within the largest dynamic range. We then proved that the two integers can be robustly reconstructed if their remainder errors are less than the eighth of the gcd of all the moduli. Finally, we applied the proposed robust generalized CRT for two integers to the determination of two frequencies from multiple undersampled waveforms. Our numerical results showed that the frequency determination performance using our newly proposed robust generalized CRT is better than that using the non-robust generalized CRT. Compared with the optimal searching based method, it has a slightly worse performance but much less computation.

\section*{Appendix}

\subsection{Proof of Lemma \ref{d_kM8}}\label{appendix_1}

\begin{proof}
Without loss of generality, we suppose $r^c_1 \le r_2^c$. Our proof consists of two steps: firstly, we  prove that there exists a subscript $k_0\in\{1,\ldots, K \}$ satisfying (\ref{d_k}). Furthermore, we obtain two clusters, $\Omega_1$ and $\Omega_2$, and prove that they satisfy (\ref{d_Omega_t}). Then, we prove the uniqueness of such $k_0$. By the definition of circular distance in (\ref{eq:define_remainder_distance}), we obtain
\begin{equation}
0 \le \left| d_M(r_1^c, r_2^c) \right| \le M/2.
\end{equation}
Then, we have two cases below.

\noindent \textbf{Case 1:} $0 \le \left| d_M(r_1^c, r_2^c) \right| < M/4$.

In this case, we have $0 \le r^c_2-r^c_1 < M/4$ or $3M/4 < r^c_2-r^c_1 < M$. Let $\rho$ and $\pi$ be two permutations of the set $\left\{1, \ldots, K \right\}$ such that
\begin{equation}\label{def_rho_pi}
\Delta r_{\rho_{(1)}}\le \cdots \le \Delta r_{\rho_{(K)}} \ \text{and} \ \Delta r_{\pi_{(1)}}\le \cdots \le \Delta r_{\pi_{(K)}},
\end{equation}
respectively, where $\Delta r_{\rho_{(k)}}\in \left\{ \Delta r_{1,1}, \ldots, \Delta r_{1,K}\right\}$, $\Delta r_{\pi_{(k)}}\in \left\{ \Delta r_{2,1}, \ldots, \Delta r_{2,K}\right\}$ for $k=1,\ldots,K$.
Define
\begin{eqnarray}\label{set_r1c}
&& \{c_1,\ldots, c_{2K}\} \triangleq  \nonumber \\
&&
\begin{cases}
\left\{r_1^c +\Delta r_{\rho_{(1)}}, \ldots, r_1^c +\Delta r_{\rho_{(K)}}, r_2^c +\Delta r_{\pi_{(1)}},\ldots, r_2^c +\Delta r_{\pi_{(K)}}\right\}, & \hspace{-2mm} \text{if} \ 0 \! \le \! r^c_2 \!- \! r^c_1 \! < \! M/4 \\
\left\{r_1^c +\Delta r_{\rho_{(1)}}, \ldots, r_1^c +\Delta r_{\rho_{(K)}}, r_2^c  +\Delta r_{\pi_{(1)}} \!\! -M,\ldots, r_2^c +\Delta r_{\pi_{(K)}} \!\! -M\right\}, & \hspace{-2mm} \text{if} \ 3M/4\! <\! r^c_2 \!- \!r^c_1 \! <  \!M.
\end{cases}
\end{eqnarray}
where $c_i$ are sorted in the increasing order as
\begin{equation} \label{b_sequence}
c_1 \le \cdots \le c_{2K}.
\end{equation}
Note that $\left|\Delta_{l,k}\right|< M/8$ for $l=1,2; k=1,\ldots, K$. If $0 \le r^c_2-r^c_1 < M/4$, then we have
\begin{equation} \label{rc_difference}
-M/4 < r_2^c + \Delta r_{\pi_{(K)}}- r_1^c - \Delta r_{\rho_{(1)}}< M/2.
\end{equation}
If $3M/4 < r^c_2-r^c_1 < M$, then we have
\begin{equation} \label{rc_difference2}
-M/4 <  r_1^c + \Delta r_{\rho_{(K)}} -(r_2^c+ \Delta r_{\pi_{(1)}}-M) < M/2.
\end{equation}
Therefore,
\begin{equation} \label{b_dis}
0 \le c_{2K}- c_1< M/2.
\end{equation}
Let
\begin{equation} \label{b_presentation}
 c_i =\alpha_i + \ell_i M, \ i=1,\ldots, 2K,
\end{equation}
where $0\le \alpha_i < M$ and $\ell_i \in \mathbb Z$. Then, we obtain from (\ref{b_sequence}) that
\begin{equation}\label{seq_l}
\ell_1 \le \cdots \le \ell_{2K}.
\end{equation}
By (\ref{b_dis}) and (\ref{b_presentation}), we have
\begin{equation}\label{alpha_diff}
0\le \alpha_{2K} -\alpha_1 + (\ell_{2K}-\ell_1)M < M/2.
\end{equation}
Since $0\le \alpha_i < M$, we have $-M < \alpha_{2K} -\alpha_1 < M$. It follows from (\ref{seq_l}) and (\ref{alpha_diff}) that
$$
\ell_{2K}-\ell_1 =0 \ \text{or} \ 1.
$$

\noindent \textbf{Subcase 1:} $\ell_{2K}-\ell_1 =0$.

In this case, $\ell_1=\cdots=\ell_{2K}$.  From (\ref{alpha_diff}),
we have
\begin{equation}\label{alpha_d}
0 \le \alpha_{2K} - \alpha_1 < M/2,
\end{equation}
and from (\ref{b_sequence}) and (\ref{b_presentation}) we have
\begin{equation}\label{case2_1}
0\le \alpha_1 \le \cdots \le \alpha_{2K}< M.
\end{equation}
From (\ref{eq:define_ri}) and (\ref{def_q}), we have
\begin{equation}\label{r_lk_q}
\tilde r_{l,k} = Mq_{l,k} + r_l^c +\Delta r_{l,k}, \ l=1,2; \ k=1,\ldots, K.
\end{equation}
Hence,
\begin{equation}\label{r_lk_M}
\langle \tilde r_{l,k} \rangle_M = \langle  r_l^c +\Delta r_{l,k} \rangle_M, \ l=1,2; \ k=1,\ldots, K.
\end{equation}
On the other hand, we obtain from (\ref{b_presentation}) that
\begin{equation}\label{bi_M}
\langle c_i\rangle_M = \alpha_i, \ i=1,\ldots, 2K.
\end{equation}
Combining  (\ref{set_r1c}), (\ref{r_lk_M}), and (\ref{bi_M}), we have that $\left\{\alpha_1, \ldots, \alpha_{2K}\right\}$ are the $2K$ remainders $\left\{\tilde r_{1}^ c, \ldots, \tilde r_{2K}^ c\right\}$. Hence, (\ref{sequence_rc}) is equivalent to (\ref{case2_1}).
If we let $k_0=K$, then we obtain from (\ref{alpha_d}) that
\begin{eqnarray*}\label{dge0_1}
D_{k_0} + D_{k_0+K} &=& D_{k_0} + \tilde r_{\varsigma_{(1)}}^c -\tilde r_{\varsigma_{(2K)}}^c + M \nonumber \\
&=& D_{k_0} + \alpha_1 - \alpha_{2K} +M \nonumber\\
&>& M/2.
\end{eqnarray*}
By the definitions of $\Omega_1$ and $\Omega_2$, we obtain from (\ref{case2_1}) that
\begin{eqnarray}\label{set2_1}
\begin{split}
& \Omega_1=\left\{\alpha_{K+1}, \ldots, \alpha_{2K}\right\}= \left\{c_{K+1}-\ell_{K+1}M, \ldots, c_{2K}-\ell_{2K}M\right\}, \\
& \Omega_2=\left\{\alpha_{1}, \ldots, \alpha_{K} \right\}= \left\{c_1-\ell_1M, \ldots, c_{K}-\ell_{K}M\right\}.
\end{split}
\end{eqnarray}
Recall that $c_1,\ldots,c_{2K}$ are sorted in the increasing order from erroneous remainders $r_1^c+\Delta r_{\rho_{(1)}}, \ldots, r_1^c+\Delta r_{\rho_{(K)}}, r_2^c+\Delta r_{\pi_{(1)}}, \ldots, r_2^c+\Delta r_{\pi_{(K)}}$. Since $\left|\Delta r_{l,k}\right|\le \tau$ for $l=1,2; k=1,\ldots,K$, we have
\begin{equation}\label{C_gamma}
c_K-c_1\le 2\tau, \  c_{2K}-c_{K+1}\le 2\tau.
\end{equation}
Thus,
\begin{eqnarray*}
&& \omega_K-\omega_1= \alpha_{2K}- \alpha_{K+1} = c_{2K}- c_{K+1} \le 2\tau, \\
&& \upsilon_K-\upsilon_1= \alpha_{K}- \alpha_1 = c_{K}- c_1 \le 2\tau.
\end{eqnarray*}

\noindent \textbf{Subcase 2:} $\ell_{2K}-\ell_1 =1 $.

In this case, there exist some $j\in\{1,\ldots, 2K\}$ satisfying $\ell_{j+1}-\ell_j =1$. Due to (\ref{seq_l}), such subscript $j$ is the only one. Moreover, we have $\ell_1= \cdots =\ell_j$ and $\ell_{j+1}=\cdots=\ell_{2K}$.
From (\ref{b_sequence}), (\ref{b_dis}), and (\ref{b_presentation}), we have
\begin{eqnarray*}
\begin{split}
& \alpha_1 \le \cdots \le \alpha_j, \;  \alpha_j - \alpha_1 < M/2, \\
& \alpha_{j+1} \le \cdots \le \alpha_{2K}, \;  \alpha_{2K} - \alpha_{j+1} < M/2.
\end{split}
\end{eqnarray*}
Since $0 \le c_{2K}- c_1< M/2$ and $\ell_{2K}-\ell_1=1$, we obtain from  (\ref{b_presentation}) that $\alpha_{2K} - \alpha_1 < -M/2$, i.e.,
\begin{equation}\label{alpha_1_gamma}
\alpha_1 - \alpha_{2K} > M/2.
\end{equation}
Thus,
\begin{equation}\label{case2_2}
0 \le \alpha_{j+1} \le \cdots \le \alpha_{2K} < \alpha_1 \le \cdots \le \alpha_j< M.
\end{equation}
Note that $\left\{\alpha_1, \ldots, \alpha_{2K}\right\}$ are the $2K$ remainders $\left\{\tilde r_{1}^ c, \ldots, \tilde r_{2K}^ c\right\}$. Hence, (\ref{sequence_rc}) is equivalent to (\ref{case2_2}).

\noindent 1) If $j<K$ and let $k_0=K-j$, then we obtain from (\ref{alpha_1_gamma}) that
\begin{eqnarray*}\label{dge0_2}
D_{k_0} + D_{k_0+K} & = & D_{k_0}+ \tilde r_{\varsigma_{(2K-j+1)}}^c -\tilde r_{\varsigma_{(2K-j)}}^c \nonumber \\
&=& D_{k_0}+ \alpha_1- \alpha_{2K} \nonumber \\
&>& M/2.
\end{eqnarray*}
By the definitions of $\Omega_1$ and $\Omega_2$, we obtain from (\ref{case2_2}) that
\begin{eqnarray}\label{set2_21}
\begin{split}
&\Omega_1=\left\{\alpha_{K+1}, \ldots, \alpha_{2K}\right\}= \left\{c_{K+1}-\ell _{K+1}M, \ldots, c_{2K}-\ell_{2K}M\right\},  \\
 &\Omega_2 = \left\{\alpha_1-M, \ldots, \alpha_j-M,
\alpha_{j+1}, \ldots, \alpha_{K}\right\}  \\
& \quad =\left\{c_1-M-\ell_1M, \ldots, c_j-M-\ell_jM,
c_{j+1}-\ell_{j+1}M, \ldots, c_{K}-\ell_{K}M\right\}.
\end{split}
\end{eqnarray}
Hence, similar to (\ref{C_gamma}) we have
$$
\omega_K-\omega_1= c_{2K}-c_{K+1} \le 2\tau, \
 \upsilon_K-\upsilon_1 = c_{K}-c_1 \le 2\tau.
$$

\noindent 2) If $j\ge K$ and let $k_0=2K-j$, then we obtain from (\ref{alpha_1_gamma}) that
\begin{eqnarray*}\label{dge0_3}
D_{k_0} + D_{k_0+K} & = & \tilde r_{\varsigma_{(2K-j+1)}}^c -\tilde r_{\varsigma_{(2K-j)}}^c+ D_{k_0+K} \nonumber \\
&=& \alpha_1- \alpha_{2K}+ D_{k_0+K}  \nonumber \\
&>& M/2.
\end{eqnarray*}
By the definitions of $\Omega_1$ and $\Omega_2$, we obtain from (\ref{case2_2}) that
\begin{eqnarray}\label{set2_22}
\begin{split}
&\Omega_1=\left\{\alpha_1, \ldots, \alpha_{K}\right\}= \left\{c_1-\ell_1M, \ldots,
c_{K}-\ell_{K}M\right\},  \\
& \Omega_2 =\left\{\alpha_{K+1}-M, \ldots, \alpha_j-M,
\alpha_{j+1}, \ldots, \alpha_{2K}\right \}  \\
& \quad =\left\{c_{K+1}-M-\ell_{K+1}M, \ldots, c_j-M-\ell_j M,
c_{j+1}-\ell_{j+1}M, \ldots, c_{2K}-\ell_{2K}M\right\}.
\end{split}
\end{eqnarray}
Hence, similar to (\ref{C_gamma}) we have
\begin{eqnarray*}
\begin{split}
& \omega_K -\omega_1= \alpha_{K}- \alpha_1 = c_{K}- c_1 \le 2\tau , \\
&  \upsilon_K-\upsilon_1= \alpha_{2K}- \alpha_{K+1}+M = c_{2K}- c_{K+1}\le 2\tau.
\end{split}
\end{eqnarray*}

\noindent \textbf{Case 2:} $M/4 \le \left| d_M(r_1^c, r_2^c) \right| \le M/2$.

In this case, we have $M/4\le r_2^c - r_1^c \le M/2$ or $M/2 < r_2^c - r_1^c  \le 3M/4$. Hence, $M/4\le r_2^c - r_1^c \le 3M/4$. Since $|\Delta r_{l,k}|< M/8$ for $l=1,2$; $k=1,\ldots, K$, we have
\begin{equation}\label{condi_r1r2}
0 < r_2^c + \Delta r_{\pi_{(k_2)}}-r_1^c - \Delta r_{\rho_{(k_1)}} < M
\end{equation}
for any $k_1, k_2 \in \{1,\ldots, K\}$. Hence, we obtain
\begin{equation}\label{rc_contrast}
r_2^c+\Delta r_{\pi_{(1)}}  > r_1^c +\Delta r_{\rho_{(K)}}
\end{equation}
and
\begin{equation}\label{con_r2c}
r_2^c + \Delta r_{\pi_{(K)}} < r^c_1 + \Delta r_{\rho_{(1)}}+M.
\end{equation}
Since $0\le r_1^c, r_2^c < M$, $ r_2^c - r_1^c \ge M/4$ and $|\Delta r_{l,k}|< M/8$, we have
\begin{equation}\label{minus_range}
-M/8 < r_1^c + \Delta r_{\rho_{(k_1)}} \le r_2^c -M/4 + \Delta r_{\rho_{(k_2)}} < M
\end{equation}
and
\begin{equation}\label{large_M}
0 <  r_1^c+M/4+ \Delta r_{\pi_{(k_1)}} \le r_2^c + \Delta r_{\pi_{(k_2)}} < 9M/8
\end{equation}
for any $k_1, k_2 \in \{1,\ldots, K\}$. By (\ref{condi_r1r2}) and (\ref{minus_range}), we obtain that if there exist some $k\in\{1,\ldots, K\}$ satisfying $r_1^c + \Delta r_{\rho_{(k)}}< 0$, then we have $r_2^c + \Delta r_{\pi_{(k_2)}} < M$ for any $k_2 \in \{1,\ldots, K\}$. By (\ref{condi_r1r2}) and (\ref{large_M}), we obtain that if there exist some $k\in\{1,\ldots, K\}$ satisfying $r_2^c + \Delta r_{\pi_{(k)}}> M$, then we have $r_1^c + \Delta r_{\rho_{(k_1)}} > 0$ for any $k_1 \in \{1,\ldots, K\}$. Hence, we have three cases below.

\noindent \textbf{Subcase 1:} $r_1^c + \Delta r_{\rho_{(k)}}< 0$ for some $k\in\{1,\ldots,K\}$.

Define $k'\in \{1,\ldots, K\}$ as
\begin{eqnarray*}
k' \triangleq
\begin{cases} K,& \text{if} ~ r_1^c +\Delta r_{\rho_{(K)}} < 0 \\
              \max\left\{k \colon r_1^c +\Delta r_{\rho_{(k)}} < 0, r_1^c +\Delta r_{\rho_{(k+1)}}\ge 0 \right\}, &\text{otherwise}.
\end{cases}
\end{eqnarray*}

\noindent 1) $k' = K$.

Combining (\ref{rc_contrast}) and (\ref{con_r2c}), we have
\begin{eqnarray}\label{case12}
0 < r_2^c + \Delta r_{\pi_{(1)}} \le \cdots
\le  r_2^c + \Delta r_{\pi_{(K)}} <
r^c_1 + \Delta r_{\rho_{(1)}}+M \le \cdots \le  r^c_1 +\Delta r_{\rho_{(K)}}+M < M.
\end{eqnarray}
From (\ref{r_lk_M}) and (\ref{case12}), we obtain that $\left\{r^c_1 + \Delta r_{\rho_{(1)}}+M, \ldots, r_1^c + \Delta r_{\rho_{(K)}}+M, r_2^c + \Delta r_{\pi_{(1)}}, \ldots, r_2^c + \Delta r_{\pi_{(K)}}\right\}$ are the $2K$ remainders $\left\{\tilde r_{1}^ c, \ldots, \tilde r_{2K}^ c\right\}$. Hence, (\ref{sequence_rc}) is equivalent to (\ref{case12}).
If we let $k_0=K$, then we have
\begin{eqnarray*}
D_{k_0} + D_{k_0+K} & = & \tilde r_{\varsigma_{(K+1)}}^c -\tilde r_{\varsigma_{(K)}}^c + \tilde r_{\varsigma_{(1)}}^c -\tilde r_{\varsigma_{(2K)}}^c +M \\
&=& r_1^c+\Delta r_{\rho_{(1)}}+M -r_2^c -\Delta r_{\pi_{(K)}} + r_2^c + \Delta r_{\pi_{(1)}} - r_1^c - \Delta r_{\rho_{(K)}}  \\
&=& \Delta r_{\rho_{(1)}}+M -\Delta r_{\pi_{(K)}} + \Delta r_{\pi_{(1)}} -\Delta r_{\rho_{(K)}} .
\end{eqnarray*}
Since $\left|\Delta r_{l,k}\right|< M/8$ for $l=1,2$; $k=1,\ldots, K$, we have
\begin{equation*}
D_{k_0} + D_{k_0+K}> M/2.
\end{equation*}
By the definitions of $\Omega_1$ and $\Omega_2$, we obtain from (\ref{case12}) that
\begin{eqnarray}\label{set1_1}
\begin{split}
& \Omega_1=\left\{r_1^c+\Delta r_{\rho_{(1)}}+M, \ldots, r_1^c + \Delta r_{\rho_{(K)}}+M \right\},\
& \Omega_2=\left\{r_2^c + \Delta r_{\pi_{(1)}}, \ldots, r_2^c +\Delta r_{\pi_{(K)}}\right\}.
\end{split}
\end{eqnarray}
Thus,
$$
\omega_K-\omega_1 \le 2\tau, \ \upsilon_K - \upsilon_1\le 2\tau.
$$

\noindent 2) $k' \in\{1,\ldots,K-1\}$.

Combining (\ref{rc_contrast}) and (\ref{con_r2c}), we have
\begin{align}\label{k_p_ineq}
 & 0 \le r_1^c + \Delta r_{\rho_{(k'+1)}}\le \cdots \le  r_1^c + \Delta r_{\rho_{(K)}} < r_2^c + \Delta r_{\pi_{(1)}} \le \cdots \nonumber \\
& \le  r_2^c + \Delta r_{\pi_{(K)}} < r^c_1 + \Delta r_{\rho_{(1)}}+M \le \cdots \le  r^c_1 +\Delta r_{\rho_{(k')}}+M < M.
\end{align}
From (\ref{r_lk_M}) and (\ref{k_p_ineq}), we obtain that $\left\{r^c_1 + \Delta r_{\rho_{(1)}}+M, \ldots, r^c_1 +\Delta r_{\rho_{(k')}}+M, r_1^c + \Delta r_{\rho_{(k'+1)}}, \ldots, r_1^c + \Delta r_{\rho_{(K)}}, \right. \\ r_2^c + \Delta r_{\pi_{(1)}}, \left. \ldots, r_2^c + \Delta r_{\pi_{(K)}}\right\}$ are the $2K$ remainders $\left\{\tilde r_{1}^ c, \ldots, \tilde r_{2K}^ c\right\}$. Hence, (\ref{sequence_rc}) is equivalent to (\ref{k_p_ineq}).
If we let $k_0= K -k'$, then we have
\begin{eqnarray*}
D_{k_0} + D_{k_0+K} & = & \tilde r_{\varsigma_{(K -k'+1)}}^c -\tilde r_{\varsigma_{(K -k')}}^c + \tilde r_{\varsigma_{(2K -k'+1)}}^c -\tilde r_{\varsigma_{(2K -k')}}^c \\
&=& r_2^c+\Delta r_{\pi_{(1)}} -r_1^c -\Delta r_{\rho_{(K)}} + r^c_1 + \Delta r_{\rho_{(1)}}+M - r_2^c - \Delta r_{\pi_{(K)}}  \\
&=& \Delta r_{\pi_{(1)}} -\Delta r_{\rho_{(K)}} + \Delta r_{\rho_{(1)}} -\Delta r_{\pi_{(K)}}+M \\
&>& M/2.
\end{eqnarray*}
By the definitions of $\Omega_1$ and $\Omega_2$, we obtain from (\ref{k_p_ineq}) that
\begin{eqnarray}\label{set_a2}
\begin{split}
& \Omega_1=\left\{r_2^c + \Delta r_{\pi_{(1)}}, \ldots, r_2^c +\Delta r_{\pi_{(K)}}\right\},\
& \Omega_2=\left\{r_1^c+\Delta r_{\rho_{(1)}}, \ldots, r_1^c + \Delta r_{\rho_{(K)}} \right\}.
\end{split}
\end{eqnarray}
Thus,
$$
\omega_K-\omega_1 \le 2\tau, \ \upsilon_K-\upsilon_1\le 2\tau.
$$

\noindent\textbf{Subcase 2:} $r_1^c + \Delta r_{\rho_{(k)}} \ge 0$ and $r_2^c + \Delta r_{\pi_{(k)}} < M$ for all $k\in\{1,\ldots,K\}$.

According to (\ref{rc_contrast}), we have
\begin{eqnarray}\label{ca2}
0 \le r_1^c + \Delta r_{\rho_{(1)}} \le \cdots
\le  r_1^c + \Delta r_{\rho_{(K)}} <
r_2^c + \Delta r_{\pi_{(1)}} \le \cdots \le  r_2^c +\Delta r_{\pi_{(K)}} < M.
\end{eqnarray}
From (\ref{r_lk_M}) and (\ref{ca2}), we obtain that $\left\{r^c_1 + \Delta r_{\rho_{(1)}}, \ldots, r_1^c + \Delta r_{\rho_{(K)}}, r_2^c + \Delta r_{\pi_{(1)}}, \ldots, r_2^c + \Delta r_{\pi_{(K)}}\right\}$ are the $2K$ remainders $\left\{\tilde r_{1}^ c, \ldots, \tilde r_{2K}^ c\right\}$. Hence, (\ref{sequence_rc}) is equivalent to (\ref{ca2}).
If we let $k_0= K $, then we have
\begin{eqnarray*}
D_{k_0} + D_{k_0+K} & = & \tilde r_{\varsigma_{(K+1)}}^c -\tilde r_{\varsigma_{(K)}}^c + \tilde r_{\varsigma_{(1)}}^c -\tilde r_{\varsigma_{(2K)}}^c +M \\
&=& r_2^c+\Delta r_{\pi_{(1)}} -r_1^c -\Delta r_{\rho_{(K)}} + r_1^c + \Delta r_{\rho_{(1)}} - r_2^c - \Delta r_{\pi_{(K)}}+M  \\
&=& \Delta r_{\pi_{(1)}} -\Delta r_{\rho_{(K)}} + \Delta r_{\rho_{(1)}} -\Delta r_{\pi_{(K)}}+M \\
&>& M/2.
\end{eqnarray*}
By the definitions of $\Omega_1$ and $\Omega_2$, we obtain from (\ref{ca2}) that
\begin{eqnarray}\label{set_a3}
\begin{split}
& \Omega_1=\left\{r_2^c + \Delta r_{\pi_{(1)}}, \ldots, r_2^c +\Delta r_{\pi_{(K)}}\right\},\
& \Omega_2=\left\{r_1^c+\Delta r_{\rho_{(1)}}, \ldots, r_1^c + \Delta r_{\rho_{(K)}} \right\}.
\end{split}
\end{eqnarray}
Thus,
$$
\omega_K -\omega_1 \le 2\tau, \ \upsilon_K -\upsilon_1\le 2\tau.
$$

\noindent\textbf{Subcase 3:} $r_2^c + \Delta r_{\pi_{(k)}} \ge M$ for some $k\in\{1,\ldots,K\}$.

Define $k''\in \{1,\ldots, K\}$ as
\begin{eqnarray*}
k'' \triangleq
\begin{cases} 1, & \text{if} \ r_2^c +\Delta r_{\pi_{(1)}} \ge M \\
              \min\left\{k \colon r_2^c +\Delta r_{\pi_{(k-1)}} < M, \ r_2^c +\Delta r_{\pi_{(k)}} \ge M\right\}, &\text{otherwise}.
\end{cases}
\end{eqnarray*}

\noindent 1) $k''=1$.

Combining (\ref{rc_contrast}) and (\ref{con_r2c}), we have
\begin{eqnarray}\label{case31}
0 \le r_2^c + \Delta r_{\pi_{(1)}}-M \le \cdots \le  r_2^c + \Delta r_{\pi_{(K)}}-M < r_1^c + \Delta r_{\rho_{(1)}} \le \cdots
\le  r_1^c + \Delta r_{\rho_{(K)}} < M.
\end{eqnarray}
From (\ref{r_lk_M}) and (\ref{case31}), we obtain that $\left\{r_1^c + \Delta r_{\rho_{(1)}}, \ldots, r_1^c + \Delta r_{\rho_{(K)}}, r_2^c + \Delta r_{\pi_{(1)}}-M, \ldots, r_2^c +\Delta r_{\pi_{(K)}}-M\right\}$ are the $2K$ remainders $\left\{\tilde r_{1}^ c, \ldots, \tilde r_{2K}^ c\right\}$. Hence, (\ref{sequence_rc}) is equivalent to (\ref{case31}).
If we let $k_0=K$, then we have
\begin{eqnarray*}
D_{k_0} + D_{k_0+K} & = & \tilde r_{\varsigma_{(K+1)}}^c -\tilde r_{\varsigma_{(K)}}^c + \tilde r_{\varsigma_{(1)}}^c -\tilde r_{\varsigma_{(2K)}}^c + M \\
&=& r_1^c+\Delta r_{\rho_{(1)}} -r_2^c -\Delta r_{\pi_{(K)}} +M + r_2^c + \Delta r_{\pi_{(1)}}-M - r_1^c - \Delta r_{\rho_{(K)}}+M \\
&=& \Delta r_{\rho_{(1)}} -\Delta r_{\pi_{(K)}} + \Delta r_{\pi_{(1)}} - \Delta r_{\rho_{(K)}}+M \\
&>& M/2.
\end{eqnarray*}
By the definitions of $\Omega_1$ and $\Omega_2$, we obtain from (\ref{case31}) that
\begin{eqnarray}\label{set_a4}
\begin{split}
& \Omega_1=\left\{r_1^c+\Delta r_{\rho_{(1)}}, \ldots, r_1^c + \Delta r_{\rho_{(K)}}\right\}, \
& \Omega_2=\left\{r_2^c + \Delta r_{\pi_{(1)}}-M, \ldots, r_2^c +\Delta r_{\pi_{(K)}}-M\right\}.
\end{split}
\end{eqnarray}
Thus,
$$
\omega_K- \omega_1 \le 2\tau, \ \upsilon_K - \upsilon_1\le 2\tau.
$$

\noindent 2) $k'' \in\{2,\ldots,K\}$.

Combining (\ref{rc_contrast}) and (\ref{con_r2c}), we have
\begin{align}\label{case32}
 & 0 \le r_2^c + \Delta r_{\pi_{(k'')}}-M \le \cdots \le  r_2^c + \Delta r_{\pi_{(K)}}-M < r_1^c + \Delta r_{\rho_{(1)}} \le \cdots \nonumber \\
& \le  r_1^c + \Delta r_{\rho_{(K)}} < r_2^c + \Delta r_{\pi_{(1)}} \le \cdots \le  r_2^c +\Delta r_{\pi_{(k''-1)}} < M.
\end{align}
From (\ref{r_lk_M}) and (\ref{case32}), we obtain that $\left\{r^c_1 + \Delta r_{\rho_{(1)}}, \ldots, r_1^c + \Delta r_{\rho_{(K)}}, r_2^c + \Delta r_{\pi_{(1)}}, \ldots, r_2^c + \Delta r_{\pi_{(k''-1)}},  \right. \\
\left. r_2^c + \Delta r_{\pi_{(k'')}}-M, \ldots, r_2^c + \Delta r_{\pi_{(K)}}-M \right\}$ are the $2K$ remainders $\left\{\tilde r_{1}^ c, \ldots, \tilde r_{2K}^ c\right\}$. Hence, (\ref{sequence_rc}) is equivalent to (\ref{case32}).
If we let $k_0=K - k''+1 $, then we have
\begin{eqnarray*}
D_{k_0} + D_{k_0+K} & = & \tilde r_{\varsigma_{(K-k''+2)}}^c -\tilde r_{\varsigma_{(K-k''+1)}}^c + \tilde r_{\varsigma_{(2K-k''+2)}}^c -\tilde r_{\varsigma_{(2K - k''+1)}}^c \\
&=& r_1^c+\Delta r_{\rho_{(1)}} -r_2^c -\Delta r_{\pi_{(K)}}+M + r_2^c + \Delta r_{\pi_{(1)}}- r_1^c - \Delta r_{\rho_{(K)}} \\
&=& \Delta r_{\rho_{(1)}} -\Delta r_{\pi_{(K)}} + \Delta r_{\pi_{(1)}} -\Delta r_{\rho_{(K)}}+M \\
&>& M/2.
\end{eqnarray*}
By the definitions of $\Omega_1$ and $\Omega_2$, we obtain from (\ref{case32}) that
\begin{eqnarray}\label{set_a5}
\begin{split}
& \Omega_1=\left\{r_1^c+\Delta r_{\rho_{(1)}}, \ldots, r_1^c + \Delta r_{\rho_{(K)}}\right\}, \
& \Omega_2=\left\{r_2^c + \Delta r_{\pi_{(1)}}-M, \ldots, r_2^c +\Delta r_{\pi_{(K)}}-M\right\}.
\end{split}
\end{eqnarray}
Thus,
$$
\omega_K-\omega_1 \le 2\tau, \ \upsilon_K -\upsilon_1\le 2\tau.
$$

Next, we prove that $k_0$ is the only subscript satisfying (\ref{d_k}). In fact, for any $k^*\in \{1,\ldots, K\} \setminus \{k_0\}$, we obtain from (\ref{sum_dk}) that
$$
D_{k^*} + D_{k^*+K} \le  \sum_{k\ne k_0}\big(D_{k}+D_{k+K}\big)
= M - \big(D_{k_0} + D_{k_0+K}\big) 
< M/2.
$$
This completes the proof.
\end{proof}

\subsection{Proof of Corollary \ref{Omega_M4}}
\begin{proof}
As we obtained in Case 2 in the above proof of Lemma \ref{d_kM8}, the two clusters $\Omega_1$ and $\Omega_2$ are given as
\begin{eqnarray*}\label{corollary_case}
\begin{cases} r_1^c + \Delta r_{\rho_{(K)}}< 0, & \text{see} \ (\ref{set1_1}) \\
              r_1^c + \Delta r_{\rho_{(k)}}< 0 \ \text{for some} \ k \in\{1,\ldots,K-1\}, & \text{see} \ (\ref{set_a2}) \\
              r_1^c + \Delta r_{\rho_{(k)}} \ge 0, \ r_2^c + \Delta r_{\pi_{(k)}} < M \ \text{for} \ 1\le k \le K, & \text{see} \ (\ref{set_a3}) \\
              r_2^c + \Delta r_{\pi_{(1)}} \ge M, & \text{see} \ (\ref{set_a4}) \\
              r_2^c + \Delta r_{\pi_{(k)}} \ge M \ \text{for some} \ k \in \{2,\ldots,K\}, & \text{see} \ (\ref{set_a5}).
\end{cases}
\end{eqnarray*}
Since the proofs of the five cases are similar, we only consider the case $r_1^c + \Delta r_{\rho_{(K)}}< 0$ in the following.

Recall that $\{\Delta r_{\rho_{(1)}}, \ldots,\Delta r_{\rho_{(K)}}\}$ are sorted in the increasing order from the remainder errors $\{\Delta r_{1,1}, \ldots, \Delta r_{1,K}\}$. Hence, for each $\Delta r_{1,k}\in \{\Delta r_{1,1}, \ldots, \Delta r_{1,K}\}$, there exists $k_1 \in \{1,\ldots, K\}$ such that
\begin{equation}\label{delta_rho_rk}
\Delta r_{1,k} = \Delta r_{\rho_{(k_1)}}.
\end{equation}
According to (\ref{Omega_1_2}) and (\ref{set1_1}), we have
$$
\Omega_1= \{\omega_1,\ldots, \omega_K\} =\{r_1^c+\Delta r_{\rho_{(1)}}+M, \ldots, r_1^c+\Delta r_{\rho_{(K)}}+M \}.
$$
Since $\omega_1 \le \cdots \le \omega_K$ and $r_1^c+\Delta r_{\rho_{(1)}}+M \le \cdots \le r_1^c+\Delta r_{\rho_{(K)}}+M$, we have
\begin{equation}\label{omega_k}
r_1^c+\Delta r_{\rho_{(k)}}+M = \omega_k, \ k=1,\ldots, K.
\end{equation}
By (\ref{delta_rho_rk}) and (\ref{omega_k}), we obtain
$$
r_1^c+\Delta r_{1,k}+M = r_1^c+\Delta r_{\rho_{(k_1)}}+M = \omega_{k_1}.
$$
By the definition of circular distance in (\ref{eq:define_remainder_distance}), we obtain
\begin{equation}\label{coro_con1}
d_M(r_1^c+\Delta r_{1,k}, \omega_{k_1})=0.
\end{equation}
By (\ref{r_lk_q}), we have
$$
d_M(\tilde r_{1,k}^c,\omega_{k_1})= d_M(\tilde r_{1,k},\omega_{k_1})=d_M(r_1^c+\Delta r_{1,k}, \omega_{k_1}) =0.
$$
Similarly, for each $\Delta r_{2,k}\in \{\Delta r_{2,1}, \ldots, \Delta r_{2,K}\}$, we can prove that there exists $\upsilon_{k_2} \in \Omega_2$, $k_2\in\{1,\ldots,K\}$ satisfying
$$
d_M(\tilde r_{2,k}^c,\upsilon_{k_2})=0.
$$
Therefore, (\ref{condi1}) holds.
\end{proof}

\subsection{Proof of Theorem \ref{Th_N_estimate}}
\begin{proof}
From  (\ref{calcu_q_k}) and (\ref{r_lk_q}), we obtain
\begin{eqnarray}\label{hat_q_tk}
\hat q_{l,k} &=& \left[ \frac{Mq_{l,k}+r^c_l +\Delta r_{l,k}-\overline \omega_t}{M}\right] \nonumber \\
&=& q_{l,k} + \left[ \frac{r^c_l +\Delta r_{l,k}-\overline \omega_t}{M}\right],
\end{eqnarray}
where $l=1,2; k=1,\ldots, K$, and $t$ is defined in (\ref{def_t}). For convenience, we denote
$\overline {\Delta r}_l \triangleq \frac {1}{K}\sum_{k=1}^{K}\Delta r_{l,k}$, $l=1,2$. Clearly, $|\overline {\Delta r}_l|\le \tau$.

\noindent \textbf{Case 1:}  $0 \le \left| d_M(r_1^c, r_2^c) \right| < M/4$.

As we obtained in Case 1 of Lemma \ref{d_kM8}, the two clusters $\Omega_1$ and $\Omega_2$ are given as
\begin{eqnarray}
\begin{cases} \ell_{2K}-\ell_1=0, & \text{see} \ (\ref{set2_1}) \\
              \ell_{j+1}-\ell_j = 1 \ \text{for some}\ j< K, & \text{see} \ (\ref{set2_21}) \\
              \ell_{j+1}-\ell_j = 1 \ \text{for some}\ j\ge K, & \text{see} \ (\ref{set2_22}),
\end{cases}
\end{eqnarray}
where $\ell_i$ are defined in (\ref{b_presentation}). Since the proofs of the three cases are similar, we only prove the case $\ell_{2K}-\ell_1=0$.

By (\ref{set_r1c}) and (\ref{b_presentation}), we obtain that $\ell_1= \cdots =\ell_{2K}=-1$, $0$, or $1$. Since the proofs of the three cases are similar, we only consider the case $\ell_1= \cdots =\ell_{2K}=0$, and $c_i$ are described for the case $3M/4 < r^c_2-r^c_1 < M$. According to (\ref{set2_1}), we have
\begin{eqnarray}\label{sets_Omega_12}
\begin{split}
& \Omega_1= \{\omega_1,\ldots,\omega_K\}=\left\{c_{K+1}, \ldots, c_{2K}\right\}, \
& \Omega_2= \{\upsilon_1,\ldots,\upsilon_K\} =\left\{c_1, \ldots, c_{K}\right\}.
\end{split}
\end{eqnarray}
Since $\omega_i \le \omega_j$, $\upsilon_i \le \upsilon_j$, and $c_i \le c_j$ for $1\le i < j\le K$, we obtain
\begin{equation}\label{c_omega}
\omega_K - \upsilon_1 = c_{2K}-c_1.
\end{equation}
Recall that $c_{2K}-c_1< M/2$ as previously shown in (\ref{b_dis}). Then, we have $\omega_K - \upsilon_1 < M/2$. According to (\ref{omega_prime}), we have
\begin{equation}\label{omega_prime_equal}
\omega'_k=\omega_k, \ k=1,\ldots,K.
\end{equation}
By the definitions of $\overline \omega_1$ and $\overline \omega_2$ in (\ref{overline_omega}), we obtain
\begin{equation}\label{omega_temp_12}
\overline \omega_1 = \frac{c_{K+1}+\cdots+c_{2K}}{K}, \ \overline \omega_2 = \frac{c_{1}+\cdots+c_{K}}{K}.
\end{equation}
According to the definitions of $c_i$ in (\ref{set_r1c}), we have
\begin{equation}\label{c_min_max}
c_{1}=\min\{r_1^c+\Delta r_{\rho_{(1)}},r_2^c+\Delta r_{\pi_{(1)}}-M\}, \ c_{2K}=\max\{r_1^c+\Delta r_{\rho_{(K)}},r_2^c+\Delta r_{\pi_{(K)}}-M\}.
\end{equation}
Then, we have four cases below.

\noindent \textcircled {1} $c_1= r_2^c+\Delta r_{\pi_{(1)}}-M$ and $c_{2K}=r_1^c+\Delta r_{\rho_{(K)}}$.

Since $c_{2K}=r_1^c+\Delta r_{\rho_{(K)}}$, we obtain from the definitions of $c_i$ in (\ref{set_r1c}) that $c_{K+1} \ge r_1^c+\Delta r_{\rho_{(1)}}$. From (\ref{omega_temp_12}), we have
$$
r_1^c+\Delta r_{\rho_{(1)}} \le c_{K+1}\le \overline \omega_1 \le c_{2K}= r_1^c+\Delta r_{\rho_{(K)}}.
$$
Note that $|\Delta r_{l,k}|\le \tau$ for $l=1,2;k=1,\ldots,K$. Hence, there exists $|\epsilon_1| \le \tau$ satisfying
\begin{equation}\label{omega_11}
\overline \omega_1 = r_1^c+\epsilon_1.
\end{equation}

Since $c_1= r_2^c+\Delta r_{\pi_{(1)}}-M$, we obtain from the definitions of $c_i$ in (\ref{set_r1c}) that $c_{K} \le r_2^c+\Delta r_{\pi_{(K)}}-M$. From (\ref{omega_temp_12}), we have
$$
r_2^c+\Delta r_{\pi_{(1)}}-M =c_1 \le \overline \omega_2 \le c_{K}\le r_2^c+\Delta r_{\pi_{(K)}}-M.
$$
Hence, there exists $|\epsilon_2|\le \tau$ satisfying
\begin{equation}\label{omega_21}
\overline \omega_2 = r_2^c- M +\epsilon_2.
\end{equation}

\noindent \textcircled {2} $c_1= r_1^c+\Delta r_{\rho_{(1)}}$ and $c_{2K}=r_2^c+\Delta r_{\pi_{(K)}}-M$.

Similarly, we obtain from (\ref{omega_temp_12}) that
\begin{eqnarray*}
\begin{split}
& r_2^c+\Delta r_{\pi_{(1)}}-M \le c_{K+1} \le \overline \omega_1 \le c_{2K}=r_2^c+\Delta r_{\pi_{(K)}}-M, \\
& r_1^c+\Delta r_{\rho_{(1)}} =c_1 \le \overline \omega_2 \le c_{K}\le r_1^c+\Delta r_{\rho_{(K)}}.
\end{split}
\end{eqnarray*}
Hence, there exist $|\epsilon_3 |\le \tau$ and $|\epsilon_4|\le \tau$ satisfying
\begin{equation}\label{omega_12}
\overline \omega_1 = r_2^c -M +\epsilon_3, \  \overline \omega_2 = r_1^c +\epsilon_4.
\end{equation}

\noindent \textcircled {3} $c_1= r_1^c+\Delta r_{\rho_{(1)}}$ and $c_{2K}=r_1^c+\Delta r_{\rho_{(K)}}$.

Similar to \textcircled {1}, we can prove that $\overline \omega_1$ is the same as (\ref{omega_11}).

Note that $3M/4 < r_2^c -r_1^c < M$. Hence,
\begin{equation}\label{dif_r2_1}
r_2^c-M < r_1^c.
\end{equation}
Since $c_1 = r_1^c+\Delta r_{\rho_{(1)}}$, we have $c_K < r_2^c-M+\Delta r_{\pi_{(K)}}$. Otherwise, $c_K \ge r_2^c-M+\Delta r_{\pi_{(K)}}$. Then, we have $c_1 \ge r_2^c-M+\Delta r_{\pi_{(1)}}$, which is a contradiction. By (\ref{omega_temp_12}), we obtain
\begin{equation}\label{addi1}
r_1^c+\Delta r_{\rho_{(1)}}=c_1 \le \overline \omega_2\le c_K < r_2^c-M+\Delta r_{\pi_{(K)}}.
\end{equation}
Since $\left|\Delta r_{l,k}\right| \le \tau$ for $l=1,2; k=1,\ldots,K$, we obtain from (\ref{dif_r2_1}) that
\begin{equation}\label{addi2}
r_2^c-M-\tau < r_1^c-\tau \le r_1^c+\Delta r_{\rho_{(1)}}.
\end{equation}
Combining (\ref{addi1}) and (\ref{addi2}), we have
$$
r_2^c-M-\tau <\overline \omega_2 < r_2^c-M+\Delta r_{\pi_{(K)}}.
$$
Hence, there exists $|\epsilon_5|\le \tau$ satisfying
\begin{equation}\label{omega_new1}
\overline \omega_2 = r_2^c - M +\epsilon_5.
\end{equation}

\noindent \textcircled {4} $c_1= r_2^c+\Delta r_{\pi_{(1)}}-M$ and $c_{2K}=r_2^c+\Delta r_{\pi_{(K)}}-M$.

Similar to \textcircled {1}, we can prove that $\overline \omega_2$ is the same as (\ref{omega_21}).

Since $c_{2K}=r_2^c+\Delta r_{\pi_{(K)}}-M$, we have $c_{K+1} > r_1^c +\Delta r_{\rho_{(1)}}$. Otherwise, $c_{K+1} \le r_1^c +\Delta r_{\rho_{(1)}}$. Then, we have $c_{2K} \le r_1^c +\Delta r_{\rho_{(K)}}$, which is a contradiction. By (\ref{omega_temp_12}), we obtain
\begin{equation}\label{addi3}
r_1^c+\Delta r_{\rho_{(1)}} < c_{K+1} \le \overline \omega_1 \le c_{2K} = r_2^c+\Delta r_{\pi_{(K)}}-M.
\end{equation}
Since $\left|\Delta r_{l,k}\right| \le \tau$ for $l=1,2; k=1,\ldots,K$, we obtain from (\ref{dif_r2_1}) that
\begin{equation}\label{addi4}
r_2^c+\Delta r_{\pi_{(K)}}-M \le r_2^c +\tau -M < r_1^c +\tau.
\end{equation}
Combining (\ref{addi3}) and (\ref{addi4}), we have
$$
r_1^c+\Delta r_{\rho_{(1)}} < \overline \omega_1 < r_1^c +\tau.
$$
Hence, there exists $|\epsilon_6|\le \tau$ satisfying
\begin{equation}\label{omega_new2}
\overline \omega_1 = r_1^c +\epsilon_6.
\end{equation}
Now, we check $\{\hat q_{1,k},\hat q_{2,k}\}$, $k=1,\ldots,K$.

According to (\ref{calcu_q_k}), either $\overline \omega_1$ or $\overline \omega_2$ is subtracted from $\tilde r_{l,k}$. Note that either $d_M(\tilde r^c_{l,k},\omega_{k_1})=0$ or $d_M(\tilde r^c_{l,k},\upsilon_{k_2})=0$ holds for some $k_1,k_2\in \{1,\ldots,K\}$. Hence, we obtain from (\ref{hat_q_tk}) that $\hat q_{l,k}$ is either
\begin{eqnarray}\label{q_est_AB}
\hat q_{l,k} = q_{l,k} + \left[ \frac{r^c_l +\Delta r_{l,k}-\overline \omega_1}{M}\right]
\text{or}\
\hat q_{l,k} = q_{l,k} + \left[ \frac{r^c_l +\Delta r_{l,k}-\overline \omega_2}{M}\right].
\end{eqnarray}

When $l=1$, $\overline \omega_1 = r_1^c+\epsilon_1$ and $\overline \omega_2 = r_2^c -M +\epsilon_2$, we have
\begin{eqnarray*}
\begin{split}
& r^c_1 +\Delta r_{1,k}-\overline \omega_1 =\Delta r_{1,k}-\epsilon_1, \
& r^c_1 +\Delta r_{1,k}-\overline \omega_2 = r^c_1 +\Delta r_{1,k}-r^c_2+M-\epsilon_2.
\end{split}
\end{eqnarray*}
Since $3M/4 < r^c_2-r^c_1 < M$, $\left|\Delta r_{1,k}\right|\le \tau$, and $\tau < M/8 $, we obtain
$$
-M/4 < \Delta r_{1,k}-\epsilon_1< M/4, \  -M/4 < r^c_1 +\Delta r_{1,k}-r^c_2+M-\epsilon_2 < M/2.
$$
Hence,
$$
\left[ \frac{r^c_1 +\Delta r_{1,k}-\overline \omega_1}{M}\right]=0, \ \left[ \frac{r^c_1 +\Delta r_{1,k}-\overline \omega_2}{M}\right]=0.
$$
It follows from (\ref{q_est_AB}) that
\begin{equation}\label{est_q_1k}
\hat q_{1,k}= q_{1,k}, \ k=1,\ldots,K.
\end{equation}
Similarly, for the cases $\overline \omega_1 = r_2^c-M+\epsilon_3$ and $\overline \omega_2 = r_1^c+\epsilon_4$; $\overline \omega_1 = r_1^c +\epsilon_1$ and $\overline \omega_2 = r_2^c -M +\epsilon_5$; $\overline \omega_1 = r_1^c +\epsilon_6$ and $\overline \omega_2 = r_2^c-M+\epsilon_2$, we can also obtain (\ref{est_q_1k}).

When $l=2$, $\overline \omega_1 = r_1^c+\epsilon_1$ and $\overline \omega_2 = r_2^c -M +\epsilon_2$, we have
\begin{eqnarray*}
\begin{split}
& r^c_2 +\Delta r_{2,k}- \overline \omega_1 = r^c_2 +\Delta r_{2,k}-r^c_1 -\epsilon_1, \
& r^c_2 +\Delta r_{2,k}-\overline \omega_2 = \Delta r_{2,k}+M-\epsilon_2.
\end{split}
\end{eqnarray*}
Note that
$$
 M/2 < r^c_2 +\Delta r_{2,k}-r^c_1 -\epsilon_1 < 5M/4, \  3M/4 < \Delta r_{2,k}+M-\epsilon_2 < 5M/4.
$$
Then,
$$
\left[ \frac{r^c_2 +\Delta r_{2,k}-\overline \omega_1}{M}\right]=1, \ \left[ \frac{r^c_2 +\Delta r_{2,k}-\overline \omega_2}{M}\right]=1.
$$
By (\ref{q_est_AB}), we have
\begin{equation}\label{est_q_2k}
\hat q_{2,k}= q_{2,k}+1, \ k=1,\ldots,K.
\end{equation}
Similarly, for the cases $\overline \omega_1 = r_2^c-M+\epsilon_3$ and $\overline \omega_2 = r_1^c+\epsilon_4$; $\overline \omega_1 = r_1^c +\epsilon_1$ and $\overline \omega_2 = r_2^c -M +\epsilon_5$; $\overline \omega_1 = r_1^c +\epsilon_6$ and $\overline \omega_2 = r_2^c-M+\epsilon_2$, we can also obtain (\ref{est_q_2k}).

Therefore, $\{\hat q_{1,k},\hat q_{2,k}\}= \{ q_{1,k}, q_{2,k}+1\}$, $k=1,\ldots,K$. By the generalized CRT for two integers obtained in \cite{wangwei2014}, we have
$$
\{\hat Q_1, \hat Q_2 \}= \{Q_1, Q_2+1 \}.
$$
Next, we check $\{\hat N_1, \hat N_2\}$ for the cases $\hat Q_1 = \hat Q_2$ and $\hat Q_1 \ne \hat Q_2$.

\noindent 1) $\hat Q_1 = \hat Q_2= Q_1 = Q_2+1$.

We have four cases below.

\noindent \textcircled {1} $c_1= r_2^c+\Delta r_{\pi_{(1)}}-M$ and $c_{2K}=r_1^c+\Delta r_{\rho_{(K)}}$.

By (\ref{estimate_NAB_eq}), (\ref{omega_11}), and (\ref{omega_21}), we have
$$
\{\hat N_1, \hat N_2\} = \{ N_1 +\epsilon_1,  N_2 +\epsilon_2\}.
$$

\noindent \textcircled {2} $c_1= r_1^c+\Delta r_{\rho_{(1)}}$ and $c_{2K}=r_2^c+\Delta r_{\pi_{(K)}}-M$.

By (\ref{estimate_NAB_eq}) and (\ref{omega_12}), we have
$$
\{\hat N_1, \hat N_2\} =\{ N_1 +\epsilon_4, N_2 +\epsilon_3 \}.
$$

\noindent\textcircled{3} $c_{1}= r_1^c+\Delta r_{\rho_{(1)}}$ and $c_{2K}= r_1^c+\Delta r_{\rho_{(K)}}$.

By (\ref{estimate_NAB_eq}), (\ref{omega_11}), and (\ref{omega_new1}), we have
$$
\{\hat N_1, \hat N_2\} = \{ N_1 +\epsilon_1,  N_2 +\epsilon_5\}.
$$

\noindent\textcircled{4} $c_{1}= r_2^c+\Delta r_{\pi_{(1)}}-M$ and $c_{2K}=r_2^c+\Delta r_{\pi_{(K)}}-M$.

By (\ref{estimate_NAB_eq}), (\ref{omega_21}), and (\ref{omega_new2}), we have
$$
\{\hat N_1, \hat N_2\} = \{ N_1 +\epsilon_6,  N_2 +\epsilon_2\}.
$$

Therefore, (\ref{N_robust}) holds.

\noindent 2) $\hat Q_1 \ne \hat Q_2$.

For simplicity, we suppose that $\hat q_{1,k} \ne \hat q_{2,k}$ for all $k=1,\ldots, K$. By (\ref{def_Omega_prime}), (\ref{sets_Omega_12}), and (\ref{omega_prime_equal}), we obtain
\begin{eqnarray*}
\Omega^\prime & =& \{\omega_1, \ldots, \omega_K, \upsilon_1, \ldots, \upsilon_K\}=\{c_1, \ldots, c_{2K}\} \\
& = & \left\{r_1^c+\Delta r_{1,1}, \ldots, r_1^c+\Delta r_{1,K}, r_2^c+\Delta r_{2,1}-M, \ldots, r_2^c+\Delta r_{2,K}-M\right\}.
\end{eqnarray*}
According to (\ref{r_lk_q}), we have
\begin{eqnarray*}
\begin{split}
 d_M(\tilde r_{1,k},r_1^c+\Delta r_{1,k}) = 0,\ d_M(\tilde r_{2,k},r_2^c+\Delta r_{2,k}-M) =0.
\end{split}
\end{eqnarray*}
It follows from (\ref{r_sigma_eta_2}) that
$$
\hat r_{1,k}^c=r_1^c+\Delta r_{1,k}, \ \hat r_{2,k}^c=r_2^c+\Delta r_{2,k}-M.
$$
From (\ref{rc_AB}), we obtain
$$
\hat r^c_1 =  r^c_1+\overline {\Delta r}_1, \ \hat r^c_2 =  r^c_2-M +\overline {\Delta r}_2.
$$
By (\ref{estimate_NAB}), we have
$$
\{\hat N_1, \hat N_2\} = \{N_1 +\overline {\Delta r}_1,N_2 +\overline {\Delta r}_2\}.
$$
Therefore, (\ref{N_robust}) holds.

\noindent \textbf{Case 2:}  $M/4 \le \left| d_M(r_1^c, r_2^c) \right| \le M/2$.

In this case, all the possible cases of the two clusters, $\Omega_1$ and $\Omega_2$, are described in (\ref{corollary_case}). Since the proofs of the five cases are similar, we only consider the case $r_1^c + \Delta r_{\rho_{(K)}}< 0$ in the following.

It is noted that the two clusters are given by (\ref{set1_1}), i.e.,
\begin{eqnarray*}
\begin{split}
&\Omega_1= \{\omega_1,\ldots,\omega_K\}=\left\{r_1^c+\Delta r_{\rho_{(1)}}+M , \ldots, r_1^c+\Delta r_{\rho_{(K)}}+M \right\}, \\
& \Omega_2= \{\upsilon_1,\ldots,\upsilon_K\} =\left\{r_2^c+\Delta r_{\pi_{(1)}}, \ldots, r_2^c+\Delta r_{\pi_{(K)}}\right\}.
\end{split}
\end{eqnarray*}
Since $\omega_K-\upsilon_1=r_1^c+\Delta r_{\rho_{(K)}}+M - (r_2^c+\Delta r_{\pi_{(1)}})$ and $M/4 \le r_2^c -r_1^c \le 3M/4$, we have
\begin{equation}
0< \omega_K - \upsilon_1 < M.
\end{equation}
Then, we have two cases: $0< \omega_K-\upsilon_1 \le M/2$ and $M/2 < \omega_K -\upsilon_1 < M$. Since the proofs of the two cases are similar, we only consider the case $M/2 < \omega_K -\upsilon_1 < M$. By the definitions of $\omega_k^\prime$ in (\ref{omega_prime}), we have
\begin{equation}\label{omega_k_ineq}
\omega_k^\prime=\omega_k-M =r_1^c + \Delta r_{1,k}, \ k=1,\ldots, K.
\end{equation}
According to the definitions of $\overline \omega_1$ and $\overline \omega_2$ in (\ref{overline_omega}), we obtain
\begin{equation}\label{omega_tem2}
\overline \omega_1 = r_1^c +\overline {\Delta r}_1, \ \overline \omega_2 = r_2^c+\overline {\Delta r}_2,
\end{equation}
Now, we check $\{\hat q_{1,k},\hat q_{2,k}\}$, $k=1,\ldots,K$.

According to (\ref{calcu_q_k}), either $\overline \omega_1$ or $\overline \omega_2$ is subtracted from $\tilde r_{l,k}$. By Corollary \ref{Omega_M4}, we know that (\ref{condi1}) holds. Hence, we obtain from (\ref{hat_q_tk}) that
\begin{eqnarray*}
\begin{split}
& \hat q_{1,k} = \left[ \frac{\tilde r_{1,k}-\overline \omega_1}{M}\right]= q_{1,k}+ \left[ \frac{\Delta r_{1,k}-\overline {\Delta r}_1}{M}\right] = q_{1,k}, \\
& \hat q_{2,k} = \left[ \frac{\tilde r_{2,k}-\overline \omega_2}{M}\right]= q_{2,k}+ \left[ \frac{\Delta r_{2,k}-\overline {\Delta r}_2}{M}\right] = q_{2,k}.
\end{split}
\end{eqnarray*}
Therefore, $\{\hat q_{1,k},\hat q_{2,k}\}= \{ q_{1,k}, q_{2,k}\}$, $ k=1,\ldots,K$. By the generalized CRT for two integers obtained in \cite{wangwei2014}, we have
$$
\{\hat Q_1, \hat Q_2\}=\{Q_1, Q_2\}.
$$
Next, we check $\{\hat N_1, \hat N_2\}$ for the cases $\hat Q_1 = \hat Q_2$ and $\hat Q_1 \ne \hat Q_2$.

\noindent 1) $\hat Q_1 = \hat Q_2$.

By (\ref{estimate_NAB_eq}) and (\ref{omega_tem2}), we have
$
\{\hat N_1, \hat N_2\} = \{N_1 + \overline {\Delta r}_1, N_2 +\overline {\Delta r}_2\}.
$
Hence, (\ref{N_robust}) holds.

\noindent 2) $\hat Q_1 \ne \hat Q_2$.

For simplicity, we suppose that $\hat q_{1,k} \ne \hat q_{2,k}$ for all $k=1,\ldots, K$. By (\ref{def_Omega_prime}) and (\ref{omega_k_ineq}), we obtain
\begin{eqnarray*}
\Omega^\prime & = & \{\omega_1^\prime, \ldots, \omega_K^\prime, \upsilon_1, \ldots, \upsilon_K\} \\
& = & \left\{r_1^c+\Delta r_{1,1}, \ldots, r_1^c+\Delta r_{1,K}, r_2^c+\Delta r_{2,1}, \ldots, r_2^c+\Delta r_{2,K} \right\}.
\end{eqnarray*}
According to (\ref{r_lk_q}), we have
\begin{eqnarray*}
d_M(\tilde r_{1,k},r_1^c+\Delta r_{1,k}) = 0, \
d_M(\tilde r_{2,k},r_2^c+\Delta r_{2,k})=0.
\end{eqnarray*}
It follows from (\ref{r_sigma_eta_2}) that
$$
\hat r_{1,k}^c=r_1^c+\Delta r_{1,k}, \ \hat r_{2,k}^c=r_2^c+\Delta r_{2,k}.
$$
According to (\ref{rc_AB}), we obtain
$$
\hat r^c_1 =  r^c_1 +\overline {\Delta r}_1, \ \hat r^c_2 =  r^c_2 +\overline {\Delta r}_2.
$$
By (\ref{estimate_NAB}), we have
$$
\{\hat N_1, \hat N_2\} = \{N_1 +\overline {\Delta r}_1, N_2 +\overline {\Delta r}_2\}.
$$
Hence, (\ref{N_robust}) holds. This completes the proof of the theorem.
\end{proof}

\small
\bibliographystyle{ieee}


\end{document}